\date{Dec 2020}
\newtheorem{lemma}{\textsc{Lemma}}[section]
\newtheorem{corollary}{\textsc{Corollary}}[section]
\newtheorem{theorem}{\textsc{Theorem}}[section]
\newtheorem{remark}{\textsc{Remark}}[section]
\newtheorem{prop}{\textsc{Proposition}}[section]
\newtheorem{example}{\textsc{Example}}
\DeclareRobustCommand{\qed}{\hfill$\square$}
\newenvironment{proof}[1][]{\noindent\emph{Proof #1-- }}{\qed\newline}
\DeclareRobustCommand{\calA}[0]{{\cal A}}
\DeclareRobustCommand{\calF}[0]{{\cal F}}
\DeclareRobustCommand{\E}[0]{\mathbb{E}}
\DeclareRobustCommand{\P}[0]{\mathbb{P}}
\DeclareRobustCommand{\Q}[0]{\mathbb{Q}}
\DeclareRobustCommand{\R}[0]{\mathbb{R}}
\DeclareRobustCommand{\supstack}[2]{\stackrel{\mbox{{\scriptsize ${#1}$}}}{ {#2} }}     
\DeclareRobustCommand{\half}{\frac12}
\DeclareRobustCommand{\IND}{\mbox{\textsf{\sl 1}}}
\DeclareRobustCommand{\1}[1][ ]{\ensuremath{\IND_{#1}\,}}
\DeclareRobustCommand{\Lin}[1]{\mathrm{lin}\ }
\DeclareRobustCommand{\prob}[1][P]{{\ensuremath{\mathbb{#1}}}}
\DeclareRobustCommand{\EX}[2][\E]{\ensuremath{{#1}\left[\, {#2}\, \right]}}
\DeclareRobustCommand{\Var}[1][ ]{\ensuremath{\mathrm{Var[#1]}}}
\DeclareRobustCommand{\essinf}[0]{\mathrm{ess inf}}   
\DeclareRobustCommand{\sig}{\ensuremath{\sigma}}           
\DeclareRobustCommand{\eqlabel}[1]{\label{eq:#1}}                                     
\DeclareRobustCommand{\eq}[1]{\begin{equation}\eqlabel{#1}}                           
\DeclareRobustCommand{\eqx}{\begin{equation}}                                         
\DeclareRobustCommand{\eqend}{\end{equation}}                                         
\DeclareRobustCommand{\eqref}[1]{\mbox{(\ref{eq:#1})}}                                
\DeclareRobustCommand{\eqary}{\begin{eqnarray*}}
\DeclareRobustCommand{\eqaryend}{\end{eqnarray*}}
\DeclareRobustCommand{\aup}{{\hat{\epsilon}}}
\DeclareRobustCommand{\adn}{{\breve{\epsilon}}}
\DeclareRobustCommand{\gmup}{{\hat{\gamma}}}
\DeclareRobustCommand{\gmdn}{{\breve{\gamma}}}
\icmltitlerunning{Learning Risk-Neutral Implied Volatility Dynamics}
\begin{document}

\icmltitle{Deep Hedging: Learning Risk-Neutral Implied Volatility Dynamics}
\icmldate{March 2021. This version: July 2021}



\icmlsetsymbol{equal}{*}

\begin{icmlauthorlist}
\icmlauthor{Hans Buehler}{equal,jpm}
\icmlauthor{Phillip Murray}{equal,jpm,imp}
\icmlauthor{Mikko S. Pakkanen}{equal,imp}
\icmlauthor{Ben Wood}{jpm}
\end{icmlauthorlist}

\icmlaffiliation{jpm}{AAO Equities QR, JP Morgan, London}
\icmlaffiliation{imp}{Imperial College London}

\icmlcorrespondingauthor{Phillip Murray}{phillip.murray@jpmorgan.com}

\icmlkeywords{Deep Hedging, Complete Market, Stochastic Volatility}

\vskip 0.3in
\setlength{\baselineskip}{12pt}



\printAffiliationsAndNotice{\icmlEqualContribution} 

\begin{abstract}
We present a numerically efficient approach for learning a risk-neutral measure for paths of simulated spot and 	option prices up to a finite horizon
 under convex transaction costs and convex trading constraints. 

This approach can then  be used to implement a \emph{stochastic implied volatility} model in the following two steps:
\begin{enumerate}
\item Train a market simulator for option prices, as discussed for example in our recent work~\citet{DHOPT};
\item Find a risk-neutral density, specifically the minimal entropy martingale measure.
\end{enumerate}
The resulting model can be used for risk-neutral pricing, or for Deep Hedging~\cite{DH} in the case of transaction costs or trading constraints.

To motivate the proposed approach, we also show that market dynamics are free from ``statistical arbitrage" in the absence of transaction
costs if and only if they follow a risk-neutral measure. We additionally provide a more general characterization in the presence of convex
transaction costs and trading constraints. 

These results can be seen as an analogue of the fundamental theorem of asset pricing for statistical arbitrage under trading frictions and are of independent interest.

\end{abstract}


\section{Introduction}

One of the long-standing challenges of quantitiative finance is the development of a tractable 
\emph{stochastic implied volatility model}. 

The aim of such a model is the simulation of spot and a number of options across strikes and maturities under risk neutral  
dynamics. Naturally, we will want to express option prices in a reasonably intuitive parametrization such as
Black~\&~Scholes implied volatilities. 
Moreover, we typically look at floating option surfaces, where implied volatilities are parameterized in time-to-maturity and moneyness, not
in fixed maturity dates or cash strikes. 
Figure~\ref{fig:histSPX} shows examples of such historical surfaces.

\begin{figure}[h]
  \centering
  \includegraphics[width=0.8\textwidth]{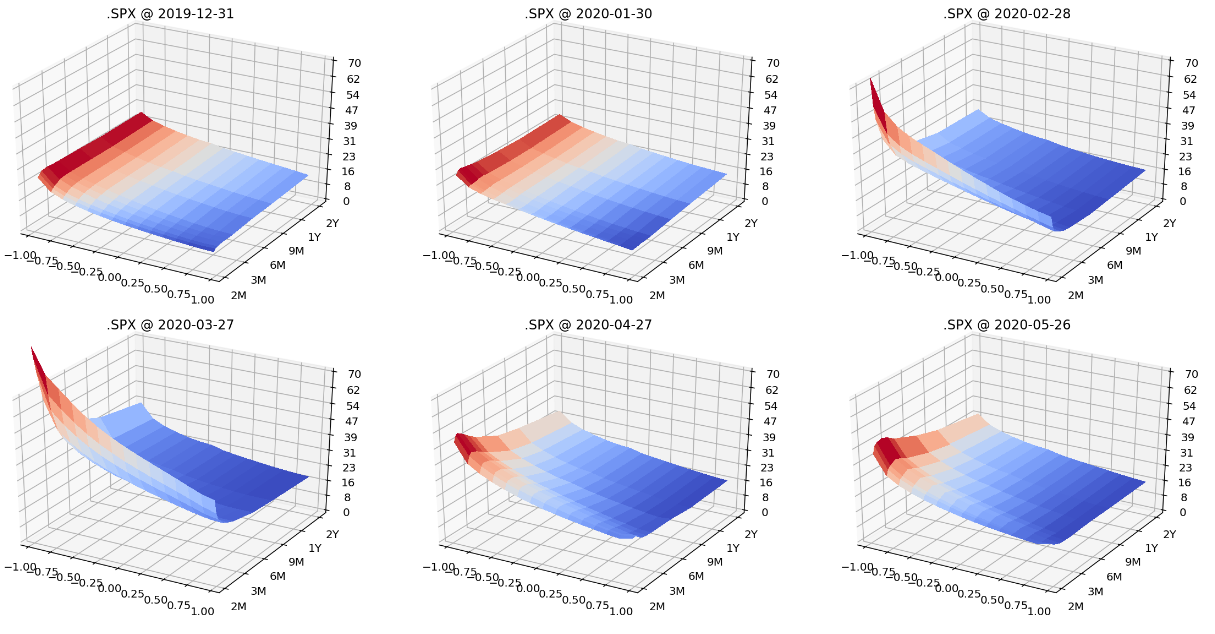}
  \caption{Historical implied volatility surfaces for S\&P500, in delta strikes}
  \label{fig:histSPX}
\end{figure}

The primary use case in practice for such models is in risk-neutral pricing where the value of any portfolio of financial instruments is given 
as an expectation. Since this is a linear operation it lends itself to mass-parallel evaluation of large portfolios of instruments. Computing a consistent hedge requires merely the computation of first order derivatives with respect to each of the hedging instruments.\footnote{It is worth noting that the vast majority of 
financial derivatives in equities and FX depends on spot, and not usually on option prices. Key exceptions are futures and options on volatilty indices
such as VIX and VSTOXX.} This approach, however, does not incorporate transaction costs or trading restrictions such as liquidity constraints or risk limits.
For such a setup we have introduced our \emph{Deep Hedging}~\citet{DH} approach. As we will explain below even this non-linear approach to 
derivative pricing and hedging is easier to deploy in production when run
under a risk-neutral measure.

\subsubsection*{Stochastic Implied Volatility}

The classic quantitative finance approach to stochastic implied volatility models is specifying the model dynamics in a suitable
 parameter space under the risk-neutral measure with model dynamics designed to be somewhat analytically tractable.
 The working paper~\citet{BRACE} summarizes various Heath-Jarrow-Morton (HJM) type drift conditions for diffusion dynamics option surfaces.
 It is clear that if the model generates paths with \emph{static arbitrage} such as negative prices for butterflies or
calendar spreads, no equivalent risk-neutral measure can exist. It is therefore imperative to use a parameterization of the option price surface which allows efficient control for absence of arbitrage.
This is in particular not the case for Black \& Scholes implied volatilities.
 
First applicable results for a term structure of implied volatilities were presented in~\citet{SCHOEN}, and extended to the entire 
option price surface in the seminal work of~\citet{WISSEL}. There, option prices are parameterized in what are called ``implied local volatilities" and which we will refer to as \emph{discrete local volatilities}
because, they are a discrete form of Dupire's local volatilities. 
Non-negativity alone ensures absence of static 
arbitrage; translation from discrete local volatilities
to option prices is numerically efficient with modern machine learning tools.\footnote{The complexity of translating a discrete local volatility 
surface to option prices is equivalent to solving the respective very sparse implicit finite different
scheme, inverting a tridiagonal matrix in every step. This operation is natively supported by TensorFlow.} 
Wissel then proceeds to describe the required continuous time drift adjustment for a diffusion driving a grid of such discrete local volatilities as a function of the free parameters.
Unnaturally, in his approach the resulting spot diffusion takes only discrete values at the strikes of the options at each maturity date and the approach is limited to a set grid of options defined in cash strikes
and fixed maturities.

More recently, a number of works
have shown that when representing an option surface with a L\'{e}vy kernel we can derive suitable HJM conditions on the parameters of the diffusion of the L\'{e}vy kernel such that the resulting stock
price is arbitrage-free, c.f.~\citet{HJMLEVY} and the references therein. Simulation of the respective model requires solving the respective Fourier equations for the spot price and options at each step in the path.

While impressive from an analytical perspective, each of these approaches faces challenges in practical implementation as the model dynamics are made to fit a particular analytical outcome, and are inherently
continuous-time models. They are therefore not 
in a natural form conducive to statistical training.

Here, we present an alternative approach which is primarily driven by the desire to have realistic implied volatilty dynamics for a finite grid of options
with efficient numerics, rather than tractable explicit analytics: we wish to first build a market simulator
of the market under the statistical measure, and then \emph{learn} an associated
risk-neutral measure using non-parametric machine learning methods. While we prefer to represent option prices with discrete local volatility for numerical efficiency, our approach will work with any other parametrization which
is free of static arbitrage.

Moreover, we postulate that trading incurs transaction costs and is subject to trading restrictions such as liquidity constraints and risk limits. We therefore
aim to extend our results to find measures under which a trader cannot make money within her trading constraints and after paying trading costs. While such a measure is not necessarily a martingale
measure, it ensures that the expected returns from trading any instruments are within the bid/ask spread. Such a measure
is a natural candidate for running our previously discussed Deep Hedging~\cite{DH} approach below.

\subsubsection*{Machine Learning Risk-Neutral Volatility Dynamics}

When training a machine learning model on historical market data, it will pick up the historical drift present in actual financial instruments.
For example, when looking at the S\&P500 index and its options from 2015 to 2021, the model may infer that being long SPY calls is 
a winning strategy. More subtly, it makes economic sense that puts, in particular, trade at a risk premium, i.e.~that selling puts
will on average generate profits. 
These effects are not statistically incorrect, but place too much trust in the implicitly estimated magnitude and persistence of the returns of those instruments.
For the purpose of risk managing a portfolio of financial instruments we therefore deem it preferable to ``remove the drift" of 
any tradable instrument.
This is a complex operation for a surface of options, and requires the construction
of an equivalent martingale measure.\footnote{Since we are in discrete time with continuous variables, 
there is no sensible notion of a ``unique" martingale
measure.}

We propose the following approach: given a market simulator we first train a ``Deep Hedging" model under the statistical measure $\P$ 
 to find an optimal \emph{statistical arbitrage} strategy $a^*$, i.e.~a strategy
which starting from an empty portfolio makes the most out of the perceived drift opportunities in the market by trading across all instruments
considering prevailing transaction costs and trading constraints. If the strategy exists and is finite, we use it to construct a change of measure through
	\eqx
		\frac{ d\Q^* }{ d\P } = \frac{ e^{ -G(a^*) } }{ \E_{\P} [ e^{- G(a^*)  } ] }. \ 
	\eqend
	
Under this measure, no strategy starting from an empty portfolio
can have a positive risk-adjusted return after adjusting for cost. Furthermore, if there are no  transaction costs  or trading constraints,
then $\Q^*$ is the \emph{minimal entropy martingale measure}.
We will elucidate this novel insight in Theorems ~\ref{thm:nsa1} and~\ref{th:robustmemm}, which are of independent interest.

As in our previous work~\citet{DH}, the approach is entirely ``model free" in the sense that the numerical implementation
of finding the risk-neutral measure change does not depend on the simulator. This allows efficient division of work between 
experts in machine learning to
train a simulator with realistic volatility dynamics, and experts in classic quantitative finance who focus on efficient implementation of the 
risk-neutral expectation machinery.

In particular, while not discussed here, our approach also lends itself to non-equity markets and
multiple assets across several currencies.

\subsubsection*{Arbitrage Free Parametrization of Option Surfaces}

It is clear that if the market simulator generates paths with \emph{static arbitrage} such as negative prices for butterflies or
calendar spreads, no such measure can exist. To avoid static arbitrage, we
 propose to use discrete local volatilities in our market simulators as a numerically efficient sparse parametrization
of our option surfaces.
Figure \ref{fig:histSPXdlv} shows such historic discrete local volatility surfaces as illustration.

We have shown in
our previous work~\citet{DHOPT} how to use modern machine methods such as generative adversarial networks to train realistic 
simulators
of the discrete local volatility surface.  
Here, we will refer to a simpler vector-autoregression model
which performs less well when generating paths over several times steps, but is easily interpretable.

We want to point out that our 
approach to computing the risk-neutral measure for a market simulator does not depend on the choice of using
discrete local volatilities.

\subsubsection*{Deep Hedging with Risk-Neutral Dynamics}

The original motivation for the work in this article is to provide a means to ``remove the drift" when applying the
non-linear Deep Hedging approach in the presence of transaction costs and trading constraints. There, we value portfolios of financial instruments via indifference pricing by minimizing a convex risk measure.
As we have noted in our original work~\citet{DH}, presence of statistical arbitrage means that the
returned strategy is a mixture of the sought, ``true" hedging strategy and proprietary trading strategy which seeks to make profits. 

Running Deep Hedging under a risk-neutral measure removes this concern. In spirit this is the same approach as risk-managing
a portfolio of stocks and futures using an industry ``covariance" risk model that also does not provide a view on 
the drifts of the underlying names. We comment on the theoretical interpretation of this approach below.

\subsection*{Related Work}

There have been a number of previous works on the statistics and simulation of implied volatility surfaces. A~PCA approach to the dynamics of the surface 
was presented in~\citet{CONT}. Simulators that are constrained to prevent static arbitrage have been less well researched. 
Besides our own work~\citet{DHOPT}, we also want to point at~\citet{IMPVAE}, where the authors build a variational auto encoder to simulate the full implied volatility surface, including the ability
to penalize for absence of arbitrage somewhat. While they report no issues in application for scenario analysis, their approach does not strictly prevent arbitrage as the no-arbitrage condition is only introduced through soft penalities, but is not
an inherent property of the resulting volatility surfaces.
 Our approach applies to this method, too, provided the resulting surfaces are indeed arbitrage free.

We not aware of any application
of machine learning 
methods for estimating the measure change towards the minimal entropy martingale measure for paths of simulated derivatives; 
the closest related field of work is that of estimating a stochastic ``pricing kernel"; see~\citet{COCH}
and the numerous references therein. In this approach, observed option data
is fit statistically to historical samples of observed spot data. This amounts in spirit to learning directly the density of
the risk-neutral measure under the statistical measure for the spot price process itself. As the number of possible densities is large, an additional penalty function
is required. If this penalty is the relative entropy, then in the absence of transaction costs
the results here are analytically equivalent when applied to just the asset spot price. We are not aware of any work
applying this approach to markets of derivatives as primary tradable assets as opposed to just spot prices.  

\section{Theoretical Background}

In this section we will focus on the theoretical underpinning of our approach. We will use a notation which we aim to be accessible to both readers
with stochastic calculus and reinforcement learning background. Throughout, we assume discounting rates, funding, dividends, and repo rates are
zero. Extension to the case where they are non-zero and deterministic is straightforward.\footnote{For the stochastic rates case
care need to be taken on the choice of discounting and investment of future proceeds.}

We will trade over time steps $0=t_0<\cdots<t_m=T$ where $T$ is the maximum maturity. For each $t \in \{ t_0,\ldots,t_m\}$, we denote by 
$s_t$ the \textbf{state} of the market at  time $t$, including relevant information from the past. 
The state represents all information
available to us, and therefore specifically also the information required to compute the mid-prices $H_t^{(t)}=(H_t^{(t,1)},\ldots,H_t^{(t,n)})$ of the hedging instruments
available to trade at $t$. In other words we may assume that $H^{(t)}_t \equiv H^{(t)}_t(s_t)$ is a function of $s_t$. Mathematically speaking,
the sequence of states $(s_t)_{t=0,\ldots,T}$ generates a sequence $(\calF_t)_{t=0,\ldots,T}$ of $\sig$-algebras forming a filtration. Being generative means
that any $\calF_t$-measurable function $f(\cdot)$ can be written as a function of $s_t$ as $f \equiv f(\cdot;s_t)$. We will generally say ``measurable"
when a variable  with index~$t$ is a function of the state $s_t$.

We further assume that for each instrument we observe at time $T$ a final mark-to-market mid-value $H_T^{(t,i)}$ which will usually be the sum of any cashflows along the path, and
which is also assumed to be a function of $s_T$. That means $s_T$ must contain sufficient information from the past along the path:
for example, if an instrument tradable at $t$ is a call with relative strike $k_i$
and time-to-maturity
$\tau_i \leq T-t$ on a spot price process $S_t$, then $H_T^{(t,i)} = (S_{t+\tau_i}/S_t - k_i)^+$. 

We also assume that $H_t \in L^\infty$, where $L^\infty$ is the set of essentially bounded measurable random variables;\footnote{See also Remark~\ref{rem:bounded}.} furthermore we simplify notation by stipulating that
the total number of instruments at each timestep is always $n$.

At each time step $t$ we may chose an \textbf{action} $a_t$ to trade in the hedging instruments~$H_t^{(t)}$ based on the information available in the state $s_t$, i.e.~$a_t\equiv a_t(s_t)$.
 Given $s_t$, each $a_t$ is constrained to a convex set $\mathcal{A}_t \equiv \mathcal{A}_t(s_t)$. We assume $0\in \mathcal{A}_t$.  It
 defines the set of admissible actions and represents risk and liquidity constraints. For an action $a_t$ also define $a^\pm_t := \max(0, \pm a_t)$ elementwise.

 For the current discussion we will assume that $\calA_t$ has a specific structure which will allow us to make structural statements such as Theorem~\ref{thm:nsa1}
 on page~\pageref{thm:nsa1}. Assume we observe a state $s_t$. We postulate then that if there is \emph{any} admissible strategy $a$ with $a^{(i)}_t>0$ ($a^{(i)}_t<0$), then there is a $\aup^{(i)}_t>0$
 ($\adn^{(i)}_t>0$)
 such that the strategy of doing nothing except buying $\aup^{(i)}_t$ (selling $\adn^{(i)}_t$)  units of $H^{(i,t)}$ is also admissible. 

We model \textbf{trading costs} and \textbf{trading restrictions} via a non-negative measurable \textbf{generalized cost function} $c_t(a_t)\equiv c_t(a_t;s_t)$ with values in $[0,\infty]$, which is convex in $a_t$ and
normalized to $c_t(0)=0$. We further make the
 structural assumption that 
 $\lim_{\varepsilon \downarrow 0} \frac1\varepsilon 
c_t(\varepsilon a_t^i) = \gmup_t^i \cdot a_t^i{}^+ + \gmdn_t^i \cdot a^i_t{}^-$ for a non-negative measurable ask spread $\gmup_t^i$ and bid spread $\gmdn_t^i$ with values in $[0,\infty]^n$.\footnote{We note that the use of this property below is invariant with respect to a linear transformation, i.e.~the statements can be generalized to linear combinations of tradable instruments.}

The convex set $\calA_t(s_t) := \{ x: c_t(x;s_t) < \infty\}$ defines the set of \textbf{admissible actions} available
at time $t$, conditional on $s_t$. We call our problem \emph{unconstrained} if $c$ is finite for all for real-valued actions in which case~$\calA_t \equiv \R^n$ for all $t$ and $s_t$.
We also set $\calA := \calA_0 \times \cdots \times \calA_{m-1}$ which denotes the set of \textbf{admissible policies}.

\begin{remark}
	We note that the following construction leads to an acceptable generalized cost function: assume that $\tilde \calA_t(s_t)$ is a convex set given $s_t$, and that $\tilde c(a_t;s_t)$
	is convex in $a_t$ and normalized, but finite and defined only on $\tilde A_t$. Then, $ c_t(a_t;s_t) := \tilde c_t(a_t) \1_{a_t \in \calA_t} + \infty \1_{a_t \not\in \calA_t}$ is a valid cost function.
	
A~common use case are sets of the form $\tilde \calA_t = \bigcap_k \{ a_t:\ g^k( a_t; s_t ) \leq u^k(s_t) \}$ where $g^k$ are convex
functions and where $u^k(s)\geq 0$. For example, $ a_t \cdot \mathrm{Vega}_t \leq \mathrm{MaxVega}_t$ for classic ``Vega" constraints or
 $ a_t \Sigma_t a_t \leq \mathrm{MaxRisk}_t$ for quadratic risk constraints.
\end{remark}
 
Convexity implies $c_t(a_t) \geq \gmup_t \cdot a^+_t + \gmdn_t \cdot a^-_t \geq 0$
 for any admissible policy $a$. 
We note that the $i$th asset can be bought (for a finite price) only on $\{\gmup^i_t<\infty\}$ and sold (for a finite price) only on $\{\gmdn^i_t<\infty\}$.
 In the following, we will refer to the joint vector with $\gamma_t=(\gmup_t,\gmdn_t)$ to ease notation.

We say that transaction costs are \emph{proportional} if simply $c_t(a_t) \equiv \gmup_t \cdot a^+_t + \gmdn_t \cdot a^-_t$ over $\calA_t$ and infinite elsewhere.

 The terminal \textbf{gain} of implementing the \textbf{trading policy} $a = (a_0,\ldots,a_{m-1}) \in \calA := \calA_0 \times \cdots \times \calA_{m-1}$ is then given by
 \eq{Ga}
 	G(a) := \sum_{t=0}^{m-1} a_t  \cdot ( H^{(t)}_T - H^{(t)}_t ) - c_t(a_t) \ 
 \eqend
Note that if $a\in\calA$ and transaction costs are proportional then $\lambda a\in\calA$ with $G(\lambda a) =\lambda G(a)$ for $\lambda\in[0,1]$.

We make the further assumption that $a \in L^\infty$ for any admissible policy $a\in\calA$, which implies in particular that $\E[|G(a)|]<\infty$.  \footnote{This assumption is to some degree necessary. Our results may not hold if the
 gains of admissible policies have no finite moments. See Remark \ref{rem:bounded} for a concrete example.}
   
\begin{remark}
\label{dh}
Our slightly unusual notation of taking the performance of each instrument to maturity reflects our ambition to look at option simulators for ``floating" implied volatility surfaces where
the observed financial instruments change from step to step. If the simulator were to simulate the same options 
with fixed cash strikes and maturities across
time steps, with prices $H_t$ of the same options available at every time step,
then the usual notation applies:
 \eq{gain_old}
 	G(a) = \sum_{t=0}^{m-1} \delta_t\, dH_t - c_t(a_t) \ , \ \ \ \delta_t := \delta_{t-1} + a_t \ , \ \ \ \delta_{-1} = 0 \ .
 \eqend
 This was the notation used in the original Deep Hedging paper of ~\citet{DH}.
\end{remark}   

\begin{remark}
	We note that \eqref{Ga} implies that spot and options which mature after $T$ are valued at mid-prices. The above can easily be extended to take into account liquidation cost at maturity.
\end{remark}

In order to assess the performance of a trading strategy, we need a risk-adjusted measure of performance. Reversing sign relative to~\citet{DH}
we use a family of normalized \emph{monetary utility functions} $U_\lambda$ parameterized by $\lambda\in(0,\infty)$, where $U_\lambda$ maps from 
all random variables integrable under $\P$ to $[-\infty,\infty)$. 

In this article we focus on the (certainty
equivalent of the) \emph{entropy},
\eq{entropy}
	U_\lambda(X) := -\frac1\lambda \log \E\left[ e^{-\lambda X} \right] \ .
\eqend
If $X$ is normally distributed then the entropy reduces to the
well-known mean-variance metric $U_\lambda(X) = \E[X] - \half \lambda \Var[X]$ pioneered by~\citet{MV}.
It is well known that $U_\lambda$ is monotone\footnote{Lemma~\ref{lem:g-prop} establishes ths condition for the entropy.}, concave and cash-invariant.

\begin{remark}\label{rem:bounded}
We stress that $U_0(X) := \lim_{\lambda\downarrow 0}U_\lambda(X) = \E[X]$ is not generally satisfied by the entropy. A~classic example is a variable $X=1-\exp(Y-\half)$ where
$Y$ is standard normal. For all $\lambda>0$ we have $U_\lambda(X)=-\infty$ while $\E_\P[X] = 0$. This is why we have here insistented on essentially bounded $H$ and $a$.\footnote{C.f.~the proof for Propostion ~\ref{prop:alllambda} on page~\pageref{prop:alllambda}.}
\end{remark}

\subsection{Removing Statistical Arbitrage}

A natural question is whether there is a strategy $a$ which has positive risk-adjusted expected return $U_\lambda ( G(a)) > 0$.
We call such a strategy a \emph{statistical arbitrage strategy}. As we have discussed in the introduction, this is not an unusual 
situation: practical strategies such as selling puts systematically \emph{are} on average profitable if we have sufficient risk capacity
and can withstand the occasional large loss. \\
Define 
\eq{g}
	g_\lambda := \sup_a \ U_\lambda\big(\, G(a) \, \big) \ .
\eqend
This optimisation program can be implemented efficiently with modern ``reinforcement learning" policy search using AAD packages such
as TensorFlow, c.f.~\citet{DH}. We note that $g_0=0$ means that even 
a risk neutral trader cannot find profitable opportunities in the market. We then say that the market is \emph{free
from statistical arbitrage}. This definition is justified as absence of statistical arbitrage then implies
also $g_\lambda = 0$ for all $\lambda\geq0$, as a result of the following lemma.

\begin{lemma}\label{lem:g-prop}
The map $\lambda \mapsto g_\lambda$ for $\lambda \in [0,\infty]$ is non-increasing and non-negative.
\end{lemma}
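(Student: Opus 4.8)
The plan is to reduce the claim about the value function $g_\lambda$ to two elementary facts about the entropic utility $U_\lambda$ evaluated at a fixed payoff, and then pass to the supremum over admissible policies. Since the inequality $U_\lambda(G(a))\ge U_{\lambda'}(G(a))$ for a single policy $a$ survives taking $\sup_a$ on both sides, it suffices to prove (i) $U_\lambda(0)=0$ and (ii) $\lambda\mapsto U_\lambda(X)$ is non-increasing on $(0,\infty)$ for every admissible $X=G(a)$, and then to handle the two endpoints $\lambda\in\{0,\infty\}$ separately.

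For non-negativity: because $0\in\calA_t$ for every $t$, the null policy lies in $\calA$, and $G(0)=0$ by \eqref{Ga}. Then $U_\lambda(0)=-\frac1\lambda\log\E[e^0]=0$ for every $\lambda\in(0,\infty)$, while at the endpoints $U_0(0)=0$ and $U_\infty(0)=\essinf 0=0$. Hence $g_\lambda=\sup_a U_\lambda(G(a))\ge U_\lambda(G(0))=0$ for all $\lambda\in[0,\infty]$.

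For monotonicity on $(0,\infty)$ I would invoke the classical convexity of the cumulant generating function. Put $\phi(\lambda):=\log\E[e^{-\lambda X}]$. Since $X=G(a)$ is essentially bounded above (the running gains $a_t\cdot(H^{(t)}_T-H^{(t)}_t)$ are in $L^\infty$ and $c_t\ge 0$), we have $\E[e^{-\lambda X}]>0$, so $\phi$ is well defined with values in $(-\infty,\infty]$, satisfies $\phi(0)=0$, and is convex on $[0,\infty)$ by Hölder's inequality applied with conjugate exponents $1/\theta$ and $1/(1-\theta)$. Now $U_\lambda(X)=-\phi(\lambda)/\lambda$ is minus the slope of the secant of the convex function $\phi$ through $(0,\phi(0))=(0,0)$ and $(\lambda,\phi(\lambda))$; convexity forces this slope to be non-decreasing in $\lambda$, so $U_\lambda(X)$ is non-increasing. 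The cases where $\phi(\lambda)=+\infty$ are trivial, as then $U_\lambda(X)=-\infty$ and the smaller of two smaller $\lambda$'s automatically dominates. Taking $\sup_a$ gives that $\lambda\mapsto g_\lambda$ is non-increasing on $(0,\infty)$.

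It remains to splice in the endpoints. At $\lambda=\infty$: from $X\ge\essinf X$ a.s.\ we get $\E[e^{-\lambda X}]\le e^{-\lambda\essinf X}$, hence $U_\lambda(X)\ge\essinf X=U_\infty(X)$ for every finite $\lambda>0$, and $\sup_a$ yields $g_\lambda\ge g_\infty$. At $\lambda=0$: if $U_0(X)$ is read as $\E[X]$, Jensen's inequality gives $\E[e^{-\lambda X}]\ge e^{-\lambda\E[X]}$, i.e.\ $U_\lambda(X)\le\E[X]=U_0(X)$; if instead $U_0(X):=\lim_{\lambda\downarrow0}U_\lambda(X)$, then by the monotonicity just proved this limit equals $\sup_{\lambda>0}U_\lambda(X)\ge U_\lambda(X)$, and interchanging $\sup_a$ with $\sup_{\lambda>0}$ gives $g_0=\sup_{\lambda>0}g_\lambda\ge g_{\lambda'}$ for all $\lambda'$. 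Either way $g_0\ge g_\lambda$ for $\lambda\in(0,\infty]$, completing the chain. I do not expect a genuine obstacle; the only care needed is bookkeeping of the $+\infty$ values of $\phi$ — controlled by essential boundedness of $G(a)$ from above — and fixing the convention for $U_0$, for which both natural choices give the conclusion.
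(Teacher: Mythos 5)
Your proof is correct, and its skeleton matches the paper's: establish $U_0(X)\geq U_\lambda(X)\geq U_{\lambda'}(X)\geq U_\infty(X)$ pointwise for each fixed $X=G(a)$, pass to $\sup_a$, and get non-negativity from the null policy $0\in\calA$. The one place you diverge is the mechanism for monotonicity on $(0,\infty)$: the paper applies Jensen's inequality directly to the convex map $x\mapsto x^{\lambda'/\lambda}$ to get $\E\big[(e^{-\lambda X})^{\lambda'/\lambda}\big]\geq \E[e^{-\lambda X}]^{\lambda'/\lambda}$ and hence $U_{\lambda'}(X)\leq U_\lambda(X)$ in one line, whereas you prove convexity of the cumulant generating function $\phi(\lambda)=\log\E[e^{-\lambda X}]$ via H\"older and read off $U_\lambda(X)=-\phi(\lambda)/\lambda$ as minus a secant slope through $(0,0)$. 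These are two dressings of the same underlying inequality (monotonicity of $p\mapsto\|e^{-X}\|_{L^p}$), so neither buys extra generality; your version does make the bookkeeping of $\phi(\lambda)=+\infty$ cleaner, since convexity with $\phi(0)=0$ immediately forces $\phi(\lambda')=+\infty$ for all $\lambda'\geq\lambda$ once $\phi(\lambda)=+\infty$, which is the fact you need to make "$U_\lambda(X)=-\infty$ propagates upward" rigorous (your phrasing of that step is slightly garbled but the substance is there). Your non-negativity argument, checking $U_\lambda(G(0))=0$ at every $\lambda$ rather than only at $\lambda=\infty$ and then invoking monotonicity as the paper does, is an equally valid and marginally more self-contained variant.
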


In the specific case of the entropy, we have the following stronger result, proved in the Appendix on page~\pageref{proof:alllambda}:
 \begin{prop}\label{prop:alllambda}
	Assume $U_\lambda$ is the entropy, and that transaction costs are proportional.
	Then, $g_\lambda > 0$ for some $\lambda\geq0$ implies $g_\lambda > 0$ for all $\lambda\geq 0$.
 \end{prop}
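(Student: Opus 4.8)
The proof rests on two facts already available: Lemma~\ref{lem:g-prop}, which says $\lambda\mapsto g_\lambda$ is non-increasing, and the homogeneity of the gains under proportional costs, namely that $\epsilon a\in\calA$ with $G(\epsilon a)=\epsilon\, G(a)$ for every $\epsilon\in[0,1]$. Since $g$ is non-increasing, the values $\lambda\le\lambda_0$ are immediate once $g_{\lambda_0}>0$; the whole content of the statement is therefore to propagate strict positivity \emph{upwards}, to all $\lambda>\lambda_0$.

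First I would use $g_{\lambda_0}>0$ to pick an admissible policy $a$ with $U_{\lambda_0}(G(a))>0$ and abbreviate $g:=G(a)\in L^\infty$. The key observation is a change of variables in the entropy: for $\mu>0$ and $\epsilon\in(0,1]$ one has $U_\mu(\epsilon g)=-\tfrac1\mu\log\E[e^{-\mu\epsilon g}]$, so when $\mu\epsilon=\lambda_0$ this equals $\tfrac{\lambda_0}{\mu}\,U_{\lambda_0}(g)$. Hence, if $\lambda_0>0$, then for any $\lambda>\lambda_0$ the choice $\epsilon:=\lambda_0/\lambda\in(0,1)$ produces an admissible policy $\epsilon a$ with $U_\lambda(G(\epsilon a))=\tfrac{\lambda_0}{\lambda}\,U_{\lambda_0}(g)>0$, whence $g_\lambda\ge U_\lambda(G(\epsilon a))>0$. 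Together with the monotonicity this covers every $\lambda\ge0$.

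The one case this scaling does not reach is $\lambda_0=0$: there $\epsilon=\lambda_0/\lambda=0$ carries no information, and a separate argument is needed to move from $g_0>0$ to $g_\lambda>0$ for $\lambda>0$. Here I would use that $g_0>0$ lets us choose $a$ with $U_0(G(a))=\E[G(a)]>0$ — legitimate precisely because $G(a)\in L^\infty$, so that $U_0$ really coincides with the mean (this is the point of Remark~\ref{rem:bounded}) — and then expand $\phi(\epsilon):=U_\lambda(\epsilon\, G(a))$ about $\epsilon=0$. Boundedness of $G(a)$ permits differentiation under the expectation, giving $\phi(0)=0$ and $\phi'(0)=\E[G(a)]>0$, so $\phi(\epsilon)>0$ for small $\epsilon>0$ and therefore $g_\lambda\ge\phi(\epsilon)>0$.

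I expect the only genuinely delicate point to be this $\lambda_0=0$ case, and more specifically the care needed around the boundedness hypotheses on $H$ and $a$: without $G(a)\in L^\infty$ the identity $U_0(G(a))=\E[G(a)]$ can fail and the first-order expansion of $\phi$ need not hold, in which case the statement itself would be false — exactly the phenomenon flagged in Remark~\ref{rem:bounded}. Everything else is bookkeeping: the exponential form of the entropy turns a rescaling of the policy into a reparametrisation of $\lambda$, and proportional-cost homogeneity guarantees the rescaled policy is still admissible with proportionally rescaled gains.
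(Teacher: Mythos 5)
Your proof is correct, and it is in fact more complete than the argument the paper gives. The two share the treatment of the delicate endpoint $\lambda_0=0$: your computation $\phi'(0)=\E[G(a)]>0$ for $\phi(\epsilon):=U_\lambda(\epsilon\,G(a))$, justified by differentiating under the expectation for bounded $G(a)$, is — via the identity $U_\lambda(\epsilon g)=\epsilon\,U_{\lambda\epsilon}(g)$ — exactly the continuity statement $U_\mu(G(a))\to\E[G(a)]$ as $\mu\downarrow 0$ that the paper's proof imports from F\"ollmer and Schied; both routes hinge on $H,a\in L^\infty$, precisely as you flag with Remark~\ref{rem:bounded}. Where you genuinely go further is the upward propagation. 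The paper's written proof establishes only that $g_\lambda=0$ for all $\lambda>0$ forces $g_0=0$, and then appeals to the monotonicity of $\lambda\mapsto g_\lambda$ from Lemma~\ref{lem:g-prop}; but monotonicity pushes positivity \emph{downwards}, so that combination alone does not exclude a threshold $\lambda^*>0$ with $g_\lambda>0$ below it and $g_\lambda=0$ above it. Your rescaling step — $\epsilon a\in\calA$ with $G(\epsilon a)=\epsilon\,G(a)$ under proportional costs, and $U_\lambda(G(\epsilon a))=\tfrac{\lambda_0}{\lambda}\,U_{\lambda_0}(G(a))>0$ for $\epsilon=\lambda_0/\lambda$ — supplies exactly this missing piece, and it is also the only point at which the proportional-cost hypothesis is actually used (the paper's own proof never invokes it explicitly). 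In short: same mechanism at $\lambda_0=0$, but your scaling argument closes a gap that the paper leaves implicit.
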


It is self-evident that if $\E[H^{(t)}_T|\calF_t]=H^{(t)}_t$, then we have $\E[ G(a) ]\leq 0$
and therefore absence of statistical arbitrage. 
In fact, we  prove in the Appendix on page~\pageref{proof:nsa1} the even stronger statement:

\begin{theorem}
\label{thm:nsa1}
		The market is free from statistical arbitrage if and only if the following two conditons hold:
	\begin{enumerate}
	\item The marginal purchase price of any instrument exceeds its expected gains:
	\eq{N3buy}
		\underbrace{ H^{(t,i)}_t +  \gmup^{(i)}_t }_{\mbox{Purchase price (Ask)}} \geq \underbrace{ \E\big[H^{(t,i)}_T \big| \mathcal{F}_t\big]}_{\mbox{Expected gains}}
	\eqend
	\item The marginal sale proceeds of any instruments do not exceed the expected liability arising from the sale:
	\eq{N3sell}
		\underbrace{ H^{(t,i)}_t - \gmdn^{(i)}_t}_{\mbox{Sale proceeds (Bid)}} \leq \underbrace{\E\big[H^{(t,i)}_T \big| \mathcal{F}_t\big]}_{\mbox{Expected liability}} \ .
	\eqend
\end{enumerate}
	In particular, in the absence of transaction costs or trading constraints for the $i$th asset we recover the classic martingale condition
	\eqx
		\E\big[H^{(t,i)}_T \big| \mathcal{F}_t\big] = H^{(t,i)}_t.
    \eqend
\end{theorem}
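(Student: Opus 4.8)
The plan is to characterise absence of statistical arbitrage ($g_0=0$) directly through expectations. Since $0\in\calA$ and $G(0)=0$ we always have $g_0\geq 0$, so ``the market is free from statistical arbitrage'' is equivalent to $\E[G(a)]\leq 0$ for every admissible policy $a\in\calA$. I would first record the routine book-keeping fact that, because $a,H\in L^\infty$, each term $a_t\cdot(H^{(t)}_T-H^{(t)}_t)$ in \eqref{Ga} is bounded, so $\sum_t c_t(a_t)=\sum_t a_t\cdot(H^{(t)}_T-H^{(t)}_t)-G(a)$ has finite expectation and, each $c_t(a_t)$ being non-negative, $\E[c_t(a_t)]<\infty$ for all $t$; hence every quantity manipulated below is finite.

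For the ``if'' direction I would fix $a\in\calA$, condition on $\calF_t$ (using $\calF_t$-measurability of $a_t$), and split each coordinate as $a^{(i)}_t=(a^{(i)}_t)^+-(a^{(i)}_t)^-$. Applying \eqref{N3buy} on $\{(a^{(i)}_t)^+>0\}$ and \eqref{N3sell} on $\{(a^{(i)}_t)^->0\}$ (on those sets the relevant spread is finite, since $c_t(a_t)\geq\gmup_t\cdot a^+_t+\gmdn_t\cdot a^-_t$ and $c_t(a_t)<\infty$) yields coordinatewise $a^{(i)}_t(\E[H^{(t,i)}_T\mid\calF_t]-H^{(t,i)}_t)\leq\gmup^{(i)}_t(a^{(i)}_t)^++\gmdn^{(i)}_t(a^{(i)}_t)^-$. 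Summing over $i$ and invoking the convexity bound $c_t(a_t)\geq\gmup_t\cdot a^+_t+\gmdn_t\cdot a^-_t$ once more shows that the step-$t$ integrand of $\E[G(a)]$ is $\leq 0$ a.s., so $\E[G(a)]\leq 0$.

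For the ``only if'' direction I would argue by contradiction. If \eqref{N3buy} fails on a set of positive measure, there are $t,i$ and $A\in\calF_t$ with $\P(A)>0$ on which $\E[H^{(t,i)}_T\mid\calF_t]-H^{(t,i)}_t-\gmup^{(i)}_t>0$ (so necessarily $\gmup^{(i)}_t<\infty$ on $A$). Then on $A$ small pure purchases of instrument $i$ are admissible, so the structural assumption on $\calA_t$ provides a measurable $\aup^{(i)}_t$, strictly positive on $A$, with the policy ``do nothing except buy $\aup^{(i)}_t$ units of $H^{(t,i)}$ at $t$'' admissible; being an admissible policy it lies in $L^\infty$, so $\E[c_t(\aup^{(i)}_t\mathbf{e}_i)]<\infty$, where $\mathbf{e}_i$ is the $i$th unit vector. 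By $0\in\calA_t$ and convexity of $\calA_t$, for $\varepsilon\in(0,1]$ the policy $a^\varepsilon$ that is $0$ at every step except whose action at $t$ has $i$th coordinate $\varepsilon\,\1[A]\,\aup^{(i)}_t$ is also admissible, and a direct computation gives $\tfrac1\varepsilon\E[G(a^\varepsilon)]=\E[\1[A]\aup^{(i)}_t(\E[H^{(t,i)}_T\mid\calF_t]-H^{(t,i)}_t)]-\E[\1[A]\tfrac1\varepsilon c_t(\varepsilon\aup^{(i)}_t\mathbf{e}_i)]$. Since $c_t$ is convex with $c_t(0)=0$, the difference quotient $\tfrac1\varepsilon c_t(\varepsilon\aup^{(i)}_t\mathbf{e}_i)$ is non-decreasing in $\varepsilon$ and, by the marginal-cost assumption, decreases to $\gmup^{(i)}_t\aup^{(i)}_t$ as $\varepsilon\downarrow 0$, dominated by its integrable value at $\varepsilon=1$; monotone convergence then gives $\tfrac1\varepsilon\E[G(a^\varepsilon)]\to\E[\1[A]\aup^{(i)}_t(\E[H^{(t,i)}_T\mid\calF_t]-H^{(t,i)}_t-\gmup^{(i)}_t)]>0$, so $\E[G(a^\varepsilon)]>0$ for small $\varepsilon$, a statistical arbitrage. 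Hence \eqref{N3buy} holds a.s., and \eqref{N3sell} follows by the mirror argument using pure sales. The ``in particular'' claim is then immediate: with $\gmup^{(i)}_t=\gmdn^{(i)}_t\equiv 0$, \eqref{N3buy} and \eqref{N3sell} read $\E[H^{(t,i)}_T\mid\calF_t]\leq H^{(t,i)}_t$ and $\E[H^{(t,i)}_T\mid\calF_t]\geq H^{(t,i)}_t$, forcing the martingale identity.

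The \textbf{main obstacle} is the necessity direction. Two points need care: first, turning the informal statement ``one may buy a little of instrument $i$'' into a genuinely \emph{measurable} admissible perturbation — this is precisely what the structural assumption on $\calA_t$ is for; and second, differentiating $\varepsilon\mapsto\E[G(a^\varepsilon)]$ at $\varepsilon=0$, i.e.\ interchanging limit and expectation. That interchange is legitimate exactly because difference quotients of a convex cost through the origin are monotone in $\varepsilon$ and because $H$ and $a$ are essentially bounded (so the dominating variable is integrable) — which is where the paper's insistence on $L^\infty$ is used.
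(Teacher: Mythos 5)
Your proof is correct and follows essentially the same route as the paper's: single-instrument buy/sell perturbations localized to sets $A\in\calF_t$ for necessity, the positive/negative-part decomposition with the bound $c_t(a_t)\geq\gmup_t\cdot a_t^+ +\gmdn_t\cdot a_t^-$ for sufficiency, and monotone difference quotients of the convex cost through the origin to reduce to the marginal (proportional) spreads. The only differences are organizational: the paper packages the proportional-cost reduction as a standalone lemma (Lemma~\ref{lem:Gc}) applied globally and obtains the pointwise inequalities via a conditional-expectation localization lemma (Lemma~\ref{lem:cond-exp}) rather than your direct contradiction on the failure set.
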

\begin{remark}
Under the conditions of the above theorem the conditional expectation $\E\big[H^{(t,i)}_T \big| \mathcal{F}_t\big]$ defines a martingale ``micro-price'' \cite{STOIKOV} within the bid--ask spread
in the sense that
\[
\underbrace{ H^{(t,i)}_t - \gmdn^{(i)}_t  }_{\text{bid}} \leq \E\big[H^{(t,i)}_T \big| \mathcal{F}_t\big] \leq \underbrace{ H^{(t,i)}_t +  \gmup^{(i)}_t  }_{\text{ask}} \ .
\]
\end{remark}

Theorem~\ref{thm:nsa1} motivates the desire to ``remove the drift" in order to simulate market dynamics free from statistical arbitrage. To do so, we draw on the theory of minimax measures (see for example \citet{GOLL}) to construct a suitable measure with the following result. We provide a brief proof in the Appendix on page~\pageref{proof:robustmemm}
in a less restrictive setting.

\begin{theorem}[Robustly removing the Drift under Transaction Costs and Trading Constraints ]\label{th:robustmemm}
Assume that the market is constrained and that transaction costs $c_t$ are super-additive. Suppose that $a^* \in \calA$ is a (not neccessarily unique) policy that minimizes
	\eq{minutil}
		\E_\P\left[ e^{-G(a)} \right] \ ,
	\eqend
satisfying $\E_\P[e^{-G(a^*)}]>0.$

Then, the
market under the measure $\Q^*$ given by 
\eq{memm}
		\frac{ d\Q^* }{ d\P } = \frac{ e^{ -G(a^*) } }{ \E_{\P} [ e^{- G(a^*)  } ] } \ 
	\eqend

 is free from statistical arbitrage for any transaction cost $c'_t \geq c_t$ and any tighter constraints $\calA_t'\subseteq \calA_t$.
\end{theorem}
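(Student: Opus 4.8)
The plan is to turn the optimality of $a^*$ for the convex program \eqref{minutil} into the bid--ask ``micro-price'' property of Theorem~\ref{thm:nsa1} \emph{under} $\Q^*$, and to observe first that the ``robustness'' part of the claim costs nothing. Write $G'$ for the gain \eqref{Ga} computed with a tighter cost $c'_t\ge c_t$; then $G'(a)\le G(a)$ pointwise and $\calA'_t\subseteq\calA_t$. Since $0\in\calA'_t$, $G'(0)=0$, and $U_\lambda(X)\le\E_{\Q^*}[X]$ by Jensen (hence $U_0\le\E_{\Q^*}$), the market under $\Q^*$ with $(c',\calA')$ is free from statistical arbitrage --- and then $g_\lambda=0$ for all $\lambda\ge 0$ by Lemma~\ref{lem:g-prop} --- as soon as $\E_{\Q^*}[G(a)]\le 0$ for every $a\in\calA$. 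So the entire statement reduces to this one inequality for the \emph{original} frictions.

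\textbf{Optimality transported to $\Q^*$ and the key estimate.} For $b\in\calA$ and $\varepsilon\in[0,1]$, convexity of $\calA$ gives $a^\varepsilon:=(1-\varepsilon)a^*+\varepsilon b\in\calA$, whence $\E_\P[e^{-G(a^\varepsilon)}]\ge\E_\P[e^{-G(a^*)}]$; as $G$ is concave (linear in $a$ minus the convex $c_t$), $e^{-G(a^\varepsilon)}\le e^{-G(a^*)}e^{\varepsilon(G(a^*)-G(b))}$, and --- everything being bounded since $a,a^*,H\in L^\infty$ --- differentiating at $\varepsilon=0^+$ by dominated convergence yields $\E_{\Q^*}[G(b)]\le\E_{\Q^*}[G(a^*)]$ for every $b\in\calA$, so $a^*$ is also $\Q^*$-optimal and it remains to bound $\E_{\Q^*}[G(a^*)]$. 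The precise way to do this is to establish, for each $t$ and each asset $i$,
\[
  H^{(t,i)}_t-\gmdn^{(i)}_t\ \le\ \E_{\Q^*}\!\big[H^{(t,i)}_T\,\big|\,\calF_t\big]\ \le\ H^{(t,i)}_t+\gmup^{(i)}_t\qquad\text{a.s.},
\]
which is automatic on $\{\gmup^{(i)}_t=\infty\}$ (resp.\ $\{\gmdn^{(i)}_t=\infty\}$), where that side is untradable. For the upper bound: on the $\calF_t$-event where asset $i$ is buyable, the structural hypothesis on $\calA_t$ supplies a pure policy $b\in\calA$ doing nothing except buying $\aup^{(i)}_t$ units of $H^{(t,i)}$; pushing $a^*$ along the segment towards $b$, taking the one-sided directional derivative of $a\mapsto\E_\P[e^{-G(a)}]$ at $a^*$ (well defined because $c_t$ is convex), disentangling the simultaneous rescaling of $a^*$, and rewriting via \eqref{memm}, the stationarity inequality becomes $\E_{\Q^*}[(H^{(t,i)}_T-H^{(t,i)}_t)\,\alpha]\le\E_{\Q^*}[\gmup^{(i)}_t\,\alpha]$ for every admissible $\calF_t$-measurable $\alpha\ge 0$, the cost-structure assumption $\lim_{\varepsilon\downarrow0}\tfrac1\varepsilon c_t(\varepsilon a^i_t)=\gmup^i_t(a^i_t)^{+}+\gmdn^i_t(a^i_t)^{-}$ identifying the marginal buying cost as $\gmup^{(i)}_t$. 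The lower bound is symmetric via the pure selling policy.

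\textbf{Conclusion from the sandwich.} Given the above, fix $a\in\calA$; using $\calF_t$-measurability of $a_t$, the tower property applied to $a_t^{+}$ and $a_t^{-}$ separately, and the inequality $c_t(a_t)\ge\gmup_t\cdot a_t^{+}+\gmdn_t\cdot a_t^{-}$ recorded after the definition of $\calA_t$,
\[
  \E_{\Q^*}[G(a)]=\sum_{t=0}^{m-1}\E_{\Q^*}\!\big[a_t\cdot(H^{(t)}_T-H^{(t)}_t)-c_t(a_t)\big]\le\sum_{t=0}^{m-1}\E_{\Q^*}\!\big[\gmup_t\cdot a_t^{+}+\gmdn_t\cdot a_t^{-}-c_t(a_t)\big]\le 0 .
\]
In particular $\E_{\Q^*}[G(a^*)]\le 0$; combined with the reduction this proves the theorem, and the passage to $(c',\calA')$ needed nothing further since $G'\le G$ and $\calA'\subseteq\calA$.

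\textbf{Main obstacle.} Everything except the sandwich is routine; the crux is the variational step. One must (i) justify differentiating $\E_\P[e^{-G(\cdot)}]$ under the expectation and handle the one-sided directional derivatives of the (possibly non-smooth, possibly vertical) $c_t$; (ii) deal with \emph{feasible directions} --- one cannot in general add $\varepsilon\times(\text{pure buy})$ to $a^*$ and stay in $\calA$, so one must move along the segment towards the pure buy/sell policy furnished by the structural assumption on $\calA_t$ and carefully subtract off the induced shrinking of $a^*$, which is exactly where that structural assumption and the super-additivity of $c_t$ are used; and (iii) at coordinates where $a^*$ presses against the no-trade region, use the cost-structure assumption to confirm the marginal price genuinely lies inside $[-\gmdn^{(i)}_t,\gmup^{(i)}_t]$ rather than merely being bounded on one side. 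This is the discrete-time, friction-laden analogue of the classical fact that the density of the exponential-utility optimiser is a martingale measure, and the clean route is to invoke the minimax/duality results of \citet{GOLL} rather than to re-derive the first-order conditions by hand.
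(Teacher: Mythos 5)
There is a genuine gap. Your reduction of the theorem to the single inequality $\E_{\Q^*}[G(a)]\le 0$ for all $a\in\calA$ (with the passage to $c'\ge c$, $\calA'\subseteq\calA$ being free) is correct and matches the logic of the paper, but the step that carries all the weight --- deriving the bid--ask sandwich for $\E_{\Q^*}[H^{(t,i)}_T\mid\calF_t]$ from the optimality of $a^*$ --- is only sketched, and you yourself list its unresolved difficulties (differentiation under the integral, feasible directions, behaviour at the boundary of the no-trade region) before deferring to a citation of \citet{GOLL}. That is precisely the content of the theorem, so the proof is not complete. Worse, the variational route has a substantive (not merely technical) obstruction you gesture at in obstacle (iii) but do not resolve: the first-order condition at $a^*$ in the ``buy asset $i$'' direction naturally produces the one-sided marginal cost of $c_t$ \emph{at $a^*_t$}, which by convexity is in general \emph{larger} than the marginal cost $\gmup^{(i)}_t$ at the origin; the structural limit assumption $\lim_{\varepsilon\downarrow0}\varepsilon^{-1}c_t(\varepsilon a_t)=\gmup_t\cdot a_t^+ +\gmdn_t\cdot a_t^-$ only controls the latter. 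Closing that gap requires the super-additivity hypothesis in an essential way, at which point the first-order machinery buys you nothing.

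The paper avoids all of this with a two-line global argument that you should compare against. Super-additivity gives $c_t(a^*_t)+c_t(a_t)\ge c_t(a^*_t+a_t)$, hence $a^*+a\in\calA$ and $e^{-G(a^*)}e^{-G(a)}\ge e^{-G(a^*+a)}$ pointwise, so
\[
\E_{\Q^*}\!\left[e^{-G(a)}\right]=\frac{\E_\P\!\left[e^{-G(a^*)-G(a)}\right]}{\E_\P\!\left[e^{-G(a^*)}\right]}\ \ge\ \frac{\E_\P\!\left[e^{-G(a^*+a)}\right]}{\E_\P\!\left[e^{-G(a^*)}\right]}\ \ge\ 1
\]
by the optimality of $a^*$ for \eqref{minutil}. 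This says $\sup_a U_\lambda(G(a))\le 0$ under $\Q^*$ at $\lambda=1$ with no differentiation whatsoever, and Proposition~\ref{prop:alllambda} together with Theorem~\ref{thm:nsa1} then upgrades this to absence of statistical arbitrage at all risk-aversion levels and to the micro-price characterization you were trying to prove directly. If you want to keep your structure, you should replace the entire ``variational step'' by this displayed inequality; your opening reduction and the final robustness remark can stay as they are.
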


Evidently, the market under $\Q^*$ is free of statistical arbitrage for all families of monetary utility functions $U_\lambda$ with  $\E[X] = U_0(X) \geq U_\lambda(X)$.\footnote{More generally, if~a convex risk-measure $-U$ is law-invariant under $\Q^*$, then $U(X) \leq \E_{\Q^*}[X]$, c.f.~ \citet[Corollary~5.1]{CVXRM}. 
}  

Note that for generalized transaction costs, the measure~$\Q^*$ constructed here need not be a martingale measure, but one under which the drift of all tradable instruments is dominated by transaction costs, in the sense of  Theorem~\ref{thm:nsa1}. Therefore there are no statistical arbitrage strategies in the market at this level of transaction cost, or higher levels, since for any given policy, increasing transaction costs leads to lower gain. 
Indeed, for any higher transaction cost $c'_t > c_t$, the unique optimal policy at all risk aversion levels (including the risk neutral trader) under $\Q^*$ is $a=0$. To see this, note that clearly $U^*_\lambda(G(0)) = 0$ for all $\lambda \geq 0$, and for any $a \neq 0$ we can write $a_t \cdot (H_T - H_t) - c'_t(a_t) = a_t \cdot (H_T - H_t) - c_t(a_t) + c_t(a_t) - c'_t(a_t)$ and so have $\E_{\Q*}[G(a)] \leq \E_{\Q*}[\sum_t c_t(a_t) - c'_t(a_t)] < 0$ which in turn implies that  $U^*_\lambda(G(a)) < 0$ for all $\lambda \geq 0$.

The practical application of the above theorem is that we may apply a measure change through a search for a statistical arbitrage strategy with smaller but not zero transaction cost. The reason for doing so is that the inclusion of some (proportional) 
trading cost will act as an $L^1$-regularizer for the search 
of~$a^*$. Using smaller transaction costs than present in the market ensures that the resulting measure is risk-neutral even in the presence of
numerical inaccuracies. 

In the case of zero transaction costs, the resulting measure is in fact an equivalent martingale measure, and furthermore, under our assumptions that $a \in L^\infty$ and $H_t \in L^\infty$, the result coincides with the following classic result, see \citet{FRITTELLI}. 

\begin{prop}\label{prop:memm}
	Assume that generalized transaction costs are zero $c \equiv 0$. Let $a^*$ be a minimizer of \eqref{minutil}. This is equivalent to saying $a^*/\lambda$ maximizes $U_\lambda(G(a))$ for any $\lambda\in(0,\infty)$. Further assume that under these assumptions the minimizer satisfies $\E_\P[e^{-G(a^*)}]>0.$
	
	Then the measure $\Q^*$ given by the density \eqref{memm} is a martingale measure. 
		
	More specifically, the measure $\Q^*$ is the minimal entropy martingale measure (MEMM) in the sense that 
	it minimizes the relative entropy
	\eqx
		H(\Q | \P) =  \E_\Q\left[ \log \frac{d\Q}{d\P} \right] \ 
	\eqend
over all equivalent martingale measures $\Q$. Trivially, this implies that the market with trading constraints and transaction costs is free of statistical arbitrage under this measure. 
\end{prop}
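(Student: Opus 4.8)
The plan is to split the statement into two parts: first that $\Q^*$ is an equivalent martingale measure, and then that it minimises relative entropy among all such measures. Throughout I would use that, since $a^*\in L^\infty$ and $H_t\in L^\infty$ and $c\equiv0$ makes $G$ linear in $a$, the gain $G(a^*)=\sum_{t=0}^{m-1}a^*_t\cdot(H^{(t)}_T-H^{(t)}_t)$ is essentially bounded, so $e^{-G(a^*)}$ is bounded and bounded away from zero; together with $\E_\P[e^{-G(a^*)}]>0$ this makes the density \eqref{memm} strictly positive and bounded, hence $\Q^*\sim\P$. The parenthetical equivalence in the statement is merely the substitution $a\mapsto\lambda a$: linearity of $G$ gives $U_\lambda(G(a))=-\frac1\lambda\log\E_\P[e^{-G(\lambda a)}]$, so $a$ maximises $U_\lambda(G(\cdot))$ if and only if $\lambda a$ minimises $\E_\P[e^{-G(\cdot)}]$.

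For the martingale property I would argue by first-order optimality. Since $a\mapsto\E_\P[e^{-G(a)}]$ is convex (an exponential of a linear functional of $a$, then integrated), $a^*$ is a minimiser if and only if it is a stationary point. Fix a time $t$, a coordinate $i$, and a set $A\in\calF_t$, and perturb $a^*$ by $\varepsilon\eta$, where $\eta$ equals $\mathbf{1}_A$ in coordinate $(t,i)$ and vanishes elsewhere; because $\calA_t\equiv\R^n$ this perturbation is admissible and stays in $L^\infty$. From $G(a^*+\varepsilon\eta)=G(a^*)+\varepsilon\,\mathbf{1}_A\,(H^{(t,i)}_T-H^{(t,i)}_t)$, with the increment bounded, dominated convergence lets me differentiate $\varepsilon\mapsto\E_\P[e^{-G(a^*+\varepsilon\eta)}]$ under the integral, and vanishing of the derivative at $\varepsilon=0$ yields $\E_\P[e^{-G(a^*)}(H^{(t,i)}_T-H^{(t,i)}_t)\,\mathbf{1}_A]=0$. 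As $A\in\calF_t$ is arbitrary this says $\E_\P[e^{-G(a^*)}(H^{(t,i)}_T-H^{(t,i)}_t)\mid\calF_t]=0$, and Bayes' rule for conditional expectations under the bounded density converts this into $\E_{\Q^*}[H^{(t,i)}_T\mid\calF_t]=H^{(t,i)}_t$, i.e.\ $\Q^*$ is a martingale measure. (An alternative route: $c\equiv0$ is trivially super-additive, so Theorem~\ref{th:robustmemm} already gives absence of statistical arbitrage under $\Q^*$, whence Theorem~\ref{thm:nsa1} in its zero-cost form reads off the same martingale identity.)

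For the MEMM property I would invoke the classical entropy identity. Let $\Q$ be any equivalent martingale measure; if $H(\Q\mid\P)=\infty$ there is nothing to prove, so assume it is finite. With $Z^*:=d\Q^*/d\P=e^{-G(a^*)}/C$ and $C:=\E_\P[e^{-G(a^*)}]\in(0,\infty)$ we have $\log Z^*=-G(a^*)-\log C$, which is bounded, so
\[
H(\Q\mid\Q^*)=\E_\Q\!\left[\log\frac{d\Q}{d\P}\right]-\E_\Q[\log Z^*]=H(\Q\mid\P)+\E_\Q[G(a^*)]+\log C .
\]
The key observation is that $\E_\Q[G(a^*)]=0$ for \emph{every} martingale measure $\Q$: each summand $a^*_t\cdot(H^{(t)}_T-H^{(t)}_t)$ is a product of a bounded $\calF_t$-measurable factor with a bounded increment, so conditioning on $\calF_t$ and using the $\Q$-martingale property of $H^{(t)}$ makes it vanish. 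Applying this with $\Q=\Q^*$ (a martingale measure by the first part) gives $H(\Q^*\mid\P)=-\log C$, hence $H(\Q\mid\Q^*)=H(\Q\mid\P)-H(\Q^*\mid\P)$; non-negativity of relative entropy then yields $H(\Q\mid\P)\ge H(\Q^*\mid\P)$ for every equivalent martingale measure $\Q$, which is the assertion. The concluding statement about absence of statistical arbitrage under $\Q^*$ is immediate from Theorem~\ref{thm:nsa1}.

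The main obstacle is the analytic justification of differentiating under the expectation in the first-order condition; this is precisely where the standing hypotheses $a\in L^\infty$, $H_t\in L^\infty$ (and the cautionary Remark~\ref{rem:bounded}) enter, since they supply a bounded — hence integrable — local dominating function for the difference quotients of $\varepsilon\mapsto e^{-G(a^*+\varepsilon\eta)}$. The remaining care is routine bookkeeping: one checks that $\E_\Q[G(a^*)]$ and $\E_\Q[\log Z^*]$ are finite (both are, being expectations of bounded variables) before splitting $H(\Q\mid\Q^*)$ additively, and one notes that convexity of $a\mapsto\E_\P[e^{-G(a)}]$ makes the stationarity condition characterise, not merely be implied by, the minimiser $a^*$.
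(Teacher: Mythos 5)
Your proof is correct, but note that the paper itself does not prove this proposition: it presents it as coinciding with the classical result of Frittelli on the minimal entropy martingale measure and cites that work rather than arguing it in the appendix. What you have written is therefore a self-contained reconstruction of that classical argument, adapted to the paper's floating-instrument notation, and both halves are sound. The first-order condition $\E_\P[e^{-G(a^*)}(H^{(t,i)}_T-H^{(t,i)}_t)\mathbf{1}_A]=0$ for all $A\in\calF_t$ is legitimately obtained by dominated convergence thanks to the standing $L^\infty$ assumptions and the unconstrainedness forced by $c\equiv 0$, and converts via Bayes' rule into the martingale property; the Pythagorean identity $H(\Q\,|\,\P)=H(\Q\,|\,\Q^*)+H(\Q^*\,|\,\P)$, valid because $\E_\Q[G(a^*)]=0$ for every equivalent martingale measure $\Q$ with finite entropy, then gives minimality. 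The route most consistent with the paper's own machinery is the one you mention only parenthetically: $c\equiv 0$ is trivially super-additive, so Theorem~\ref{th:robustmemm} yields absence of statistical arbitrage under $\Q^*$, and Theorem~\ref{thm:nsa1} with zero spreads reads off the martingale condition directly; that route buys economy (no differentiation under the integral) at the cost of not yielding the entropy-minimality claim, which still requires your third paragraph or the external citation. The one point worth stating explicitly in your write-up is that stationarity holds in both directions of the perturbation, which is fine here precisely because $\calA_t\equiv\R^n$ when $c\equiv 0$, so $\varepsilon\eta$ is admissible for $\varepsilon$ of either sign.
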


To illustrate our results, it is helpful to work out the following toy example:

\begin{example}[One-period binomial model, with transaction costs]

Let $n=1$ and let $H_1 - H_0$ be a one-dimensional random variable with
\begin{equation*}
\mathbb{P}(H_1 - H_0 = u) = p, \quad \mathbb{P}(H_1 - H_0 = d) = 1-p,
\end{equation*}
where $u > d$ and $p \in (0,1)$ are parameters. Assume also that $\mathcal{F}_0 = \{\emptyset,\Omega\}$, $\mathcal{F}_1 = \sigma\{ H_1\}$. (We note that $H_0$ is then non-random.) Any admissible policy is then of the singleton form $a \in \R$, a (non-random) parameter --- let us assume here that it is unrestricted. With symmetric proportional transaction cost $\gamma > 0$ we have:
\begin{equation*}
\mathbb{P}(G(a) = a u - |a| \gamma) = p, \quad \mathbb{P}(G(a) = a d  - |a| \gamma) = 1-p.
\end{equation*}
\end{example}

Firstly,
\begin{equation*}
U_0(G(a)) = \E[G(a)] = ( a u - |a| \gamma) p + ( a d  - |a| \gamma ) (1-p) = a (u p + d (1-p)) - |a| \gamma.
\end{equation*}
Thus, $\sup_{a} U_0(G(a)) = \sup_{a \in \R} a (u p + d (1-p)) - |a| \gamma$ equals zero in the case $| u p + d (1-p)| \leq \gamma$ and $\infty$ otherwise. Note that this includes the zero transaction cost case implying then that $u p + d (1-p)=0$.

Secondly, for $\lambda \in (0,\infty)$,
\begin{equation*}
U_\lambda(G(a)) = -\frac{1}{\lambda} \log \E[e^{-\lambda G(a)}] = -\frac{1}{\lambda} \log\big( e^{-\lambda (a u - |a|\gamma)} p +  e^{-\lambda (a d- |a|\gamma)} (1-p) \big),
\end{equation*}
whereby finding the maximizer of $a \mapsto U_\lambda(G(a))$ boils down to finding the minimizer of $a \mapsto e^{-\lambda (a u- |a|\gamma)} p +  e^{-\lambda (a d- |a|\gamma)} (1-p) =: f(a)$. If $d < u \leq -\gamma$ or $u > d \geq \gamma$ (the case where there is classical arbitrage) then $f$ is strictly monotonic with $\inf_{a \in \R} f(a) = 0$, so that $\sup_a U_\lambda (G(a))= \infty$. Now, assume we are outside of those cases. For $a > 0$ we have, 

\begin{equation*}
f'(a) = -\lambda \big((u - \gamma) p e^{-\lambda (a u - a\gamma)} + (d - \gamma) (1-p)e^{-\lambda (a d - a\gamma)}\big) = 0 \quad \Leftrightarrow \quad a = \frac{\log \frac{p(u-\gamma)}{-(1-p)(d-\gamma)} }{\lambda (u-d)} =: a^*.
\end{equation*}
Now $a^*$ is ensured to exist provided that  
\begin{equation*}
 \frac{p(u-\gamma)}{-(1-p)(d-\gamma)}  > 0  \Leftrightarrow u > \gamma > d.
\end{equation*}
and it is positive (as assumed) provided that 
\begin{equation*}
 \frac{p(u-\gamma)}{-(1-p)(d-\gamma)}  > 1  \Leftrightarrow up + d(1-p) > \gamma.
\end{equation*}

By symmetry, for $a < 0$ we find that for 
\begin{equation*}
a^* := \frac{\log \frac{p(u+\gamma)}{-(1-p)(d+\gamma)} }{\lambda (u-d)}
\end{equation*}
we have $f'(a^*)=0$, which exists provided that $u > -\gamma > d$ and is negative as required provided that
\begin{equation*}
 \frac{p(u+\gamma)}{-(1-p)(d+\gamma)}  < 1  \Leftrightarrow up + d(1-p) < -\gamma.
\end{equation*}

Since $f''(a) > 0$, $f$ is convex and continuous, including at zero, where $f(0) = 1$, it must hold that $f(a^*) < 1$ in both cases. That is, we have statistical arbitrage $\sup_{a} U_\lambda(G(a)) > 0$ provided that $|up + d(1-p)| > \gamma$. Otherwise, we have no turning points for $f$ and instead acheive a global minimum at $a=0$. Note that the case $ -\gamma < d < u < \gamma$ immediately implies that $|up + d(1-p)| < \gamma$ and hence we have no statistical arbitrage.

Thirdly, for $\lambda = \infty$, 
\begin{equation*}
U_\infty(G(a)) = \essinf G(a) = \begin{cases}
a d - a \gamma, & a >0, \\
0, & a = 0, \\
a u + a \gamma, & a <0.
\end{cases}
\end{equation*}
In the so-called strong arbitrage cases $u>d > \gamma$ and $-\gamma > u > d$, choosing $a > 0$ and $a < 0$, respectively, we get $U_\infty(G(a)) >0$ and letting $|a| \rightarrow \infty$ shows that $\sup_a U_\infty(G(a)) = \infty$. In the cases $u>d = \gamma$ and $-\gamma = u > d$ we have classical (non-strong) arbitrage and $U_\infty(G(a)) \leq 0$ for any policy $a$, whereby $\sup_a U_\infty(G(a)) = 0$, attained at $a = 0$. Finally, in the arbitrage-free case we have similarly $U_\infty(G(a)) \leq 0$, so that $\sup_a U_\infty(G(a)) = 0$.





\section{Numerical Results}
To illustrate the change of measure constructed in Theorem~\ref{th:robustmemm} and Proposition~\ref{prop:memm} and, we begin by implementing the approach in some model examples where the theoretical baselines are tractable, although we stress that the approach described above does not rely on any model specification and is fully versatile to be utilized with any market simulator, in particular ``black box'' neural network simulators.

In the numerical implementation, we consider a scenario where we want to remove statistical arbitrage from a specific set of $N$ paths generated from $\P$. We parametrize our policy by a neural network $a_t = a_t(\theta_t | S_0, \ldots, S_t)$ where $\theta = (\theta_t)_t$ is the entire parameter vector. We train the network using the usual stochastic gradient descent methods, to obtain a policy $a(\theta^*)$.  We can then obtain probability weights $q^*$ for each path under $\Q^*$ via 
\eqx
q^* = \frac{e^{-G(a(\theta^*))}}{\sum e^{-G(a(\theta^*))}}
\eqend
and then expectations under $\Q^*$ are just weighted sums using these weights. 

Although neural networks are well known to be universal approximators, due to the estimation and approximation error inherent in this method, given the non-smoothness of our cost function we still expect some numerical noise from the minimizer of the $L^1$ metric, the resulting measure may not be entirely free of statistical arbitrage in the sense of Therorem ~\ref{th:robustmemm}. However, provided our trained policy is close to optimal on the second set of sample paths, we can bound the maximum utility of a risk averse trader under $\Q^*$ in the following way.

\begin{prop}
\label{prop:approximationerr}
Suppose that $a^*$ is the truly optimal policy, and let $\tilde a:= a(\theta^*)$ be the approximated policy from a neural network
with unknown approximation error $\epsilon>0$ such that $\E_\P[e^{-G(a(\theta^*))}] \leq ( 1 + \epsilon) \E_\P[e^{-G(a^*)}]$.
Denote by  $ \tilde U_\lambda$  the entropy under
under the measure $\tilde \Q$ given by $\tilde a$.

For any risk aversion $\lambda$, and any policy $a$ we have 
then
$ \tilde U_\lambda(G(a)) \leq \frac{1}{\lambda} \log ( 1 + \epsilon)$ with equality attained in the absence of transaction cost.
\end{prop}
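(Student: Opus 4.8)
The plan is to recast the claim as a lower bound on an exponential moment and then settle it with a single application of Jensen's inequality after a change of measure. Since $\tilde U_\lambda(G(a)) = -\frac{1}{\lambda}\log\E_{\tilde\Q}[e^{-\lambda G(a)}]$, the asserted inequality is equivalent to $\E_{\tilde\Q}[e^{-\lambda G(a)}] \ge (1+\epsilon)^{-1}$, and I would establish this by passing to the measure $\Q^*$ of \eqref{memm} associated with the \emph{true} optimiser $a^*$ and exploiting two of its properties.

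The first property is a first-order optimality statement for $a^*$, namely $\E_{\Q^*}[G(a^*)] \ge 0$. Since $\calA$ is convex with $0 \in \calA$, the segment $t \mapsto (1-t)a^*$, $t \in [0,1]$, stays admissible, so $\phi(t) := \E_\P[e^{-G((1-t)a^*)}]$ is convex on $[0,1]$ with minimum at $t=0$, whence $\phi'(0^+) \ge 0$; differentiating under the expectation (legitimate by the standing $L^\infty$ hypotheses) and using that $G$ is concave with $G(0)=0$ — so that the one-sided derivative of $t \mapsto G((1-t)a^*)$ at $0$ dominates $-G(a^*)$ — gives $0 \le \phi'(0^+) \le \E_\P[G(a^*)e^{-G(a^*)}]$, i.e. $\E_{\Q^*}[G(a^*)] \ge 0$. (In the unconstrained proportional-cost case one has $G((1-t)a^*) = (1-t)G(a^*)$ and $t=0$ is an interior minimum, so in fact $\E_{\Q^*}[G(a^*)] = 0$.) The second property is that under $\Q^*$ the market is free from statistical arbitrage, so $\E_{\Q^*}[G(b)] \le 0$ for every admissible policy $b$; this is the content of Theorem~\ref{th:robustmemm} (or Proposition~\ref{prop:memm} in the frictionless case), which I would quote rather than reprove.

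Granting these, inserting the densities of $\tilde\Q$ and $\Q^*$ yields
\[
\E_{\tilde\Q}\!\left[e^{-\lambda G(a)}\right]
= \frac{\E_\P\!\left[e^{-\lambda G(a) - G(\tilde a)}\right]}{\E_\P\!\left[e^{-G(\tilde a)}\right]}
= \frac{\E_\P\!\left[e^{-G(a^*)}\right]}{\E_\P\!\left[e^{-G(\tilde a)}\right]}\,
\E_{\Q^*}\!\left[e^{\,G(a^*) - G(\tilde a) - \lambda G(a)}\right].
\]
The approximation hypothesis makes the prefactor at least $(1+\epsilon)^{-1}$, and Jensen's inequality under $\Q^*$ bounds the remaining factor below by $\exp\!\big(\E_{\Q^*}[G(a^*)] - \E_{\Q^*}[G(\tilde a)] - \lambda\,\E_{\Q^*}[G(a)]\big)$, whose exponent is $\ge 0$: the first term is $\ge 0$ by the first property, while $-\E_{\Q^*}[G(\tilde a)] \ge 0$ and $-\lambda\E_{\Q^*}[G(a)] \ge 0$ by the second (using $\lambda > 0$). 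Hence $\E_{\tilde\Q}[e^{-\lambda G(a)}] \ge (1+\epsilon)^{-1}$, as required; integrability of $G(a), G(\tilde a), G(a^*)$ under $\Q^*$, needed for Jensen, follows from the $L^\infty$ hypotheses together with $\E_\P[e^{-G(a^*)}] > 0$.

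For the equality assertion I would specialise to $c \equiv 0$: there $G$ is linear in the policy, the problem is unconstrained, and $\Q^*$ is a martingale measure, so taking $a := \lambda^{-1}(a^* - \tilde a)$ makes the exponent $G(a^*) - G(\tilde a) - \lambda G(a)$ vanish identically, Jensen becomes an equality, and $\tilde U_\lambda(G(a)) = \frac{1}{\lambda}\log\big(\E_\P[e^{-G(\tilde a)}]/\E_\P[e^{-G(a^*)}]\big)$, which equals $\frac{1}{\lambda}\log(1+\epsilon)$ for the least admissible $\epsilon$. I expect the only genuinely non-routine step to be the preparatory one — establishing $\E_{\Q^*}[G(a^*)] \ge 0$ rigorously, i.e. handling the one-sided derivative of the possibly non-smooth cost $c_t$ at the origin and combining it correctly with the no-statistical-arbitrage property of $\Q^*$; once those two sign facts are in hand, the displayed identity and one invocation of Jensen finish the proof.
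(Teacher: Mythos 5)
Your proof is correct, but it takes a genuinely different route from the paper's. The paper argues in two lines, mirroring its proof of Theorem~\ref{th:robustmemm}: it writes $\inf_a\E_{\tilde\Q}[e^{-\lambda G(a)}]$ as a ratio of $\P$-expectations, uses super-additivity of the cost pointwise to get $-G(\tilde a)-\lambda G(a)\geq -G(\tilde a+\lambda a)$, invokes global optimality of $a^*$ to bound the resulting infimum below by $\E_\P[e^{-G(a^*)}]$, and finishes with the approximation hypothesis. You instead factor the $\tilde\Q$-density through $\Q^*$, apply Jensen under $\Q^*$, and reduce the claim to two sign conditions on $\Q^*$-drifts: $\E_{\Q^*}[G(a^*)]\geq 0$, which you derive from first-order optimality along the admissible ray $t\mapsto(1-t)a^*$, and $\E_{\Q^*}[G(b)]\leq 0$ for every admissible $b$, which you import from Theorem~\ref{th:robustmemm}. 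Both routes ultimately rest on super-additivity (yours only through the citation of Theorem~\ref{th:robustmemm}), but your decomposition buys something concrete: the paper's step $(*)$ additionally needs $c_t(\lambda a_t)\leq\lambda\,c_t(a_t)$, which convexity with $c_t(0)=0$ guarantees only for $\lambda\leq 1$, whereas in your argument $\lambda$ only ever multiplies the non-positive quantity $\E_{\Q^*}[G(a)]$, so the bound is uniform in $\lambda>0$ with no extra hypothesis. The price is the preparatory lemma $\E_{\Q^*}[G(a^*)]\geq 0$; your derivation of it is sound, though the interchange of limit and expectation there is really justified by monotonicity of the convex difference quotients (monotone convergence) rather than by the $L^\infty$ hypotheses per se. Your handling of the equality case matches the paper's.
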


\subsection{Toy example: Black-Scholes model} 
As a simple example, consider first a market described by a discrete version of a one-dimensional Black Scholes model with spot dynamics $S_t = S_0 \exp( (\mu - 1/2\sigma^2) t + \sigma W_t)$ where $\sigma > 0$ and $W_t$ a one-dimensional Brownian motion. No options are simulated. We ensure the integrability condition of the spot process by simulating only a finite set of paths.
Clearly, if $\mu \neq 0$ we have statistical arbitrage since the $S_t$ will no longer be a martingale, indeed for $\mu > 0$, a simple `buy and hold' policy $a_0 = c > 0$, $a_t = 0$ for all $t>0$ will produce positive utility. 

For a numerical implementation of this scenario, we set a trading horizon of 30 trading days ($T=30/252$), allowing daily rebalancing of the spot. The spot drift is $\mu = 0.05$ and the volatility is $\sigma = 0.15.$ We simulate a set of $10^6$ paths as training data for our policy. 
We parametrize our policy with a neural network trained via the Adam optimizer. Specifically, we use a two layer recurrent neural network structure with 32 units in each layer
and with global learning rate of $2 \times 10^{-5}$, batch size of 256, and trained for 100 epochs. To assess the performance, we track three metrics:

\begin{enumerate}
\item \textbf{Mean square error for $\frac{d\Q^*}{d\P}$.} In the Black Scholes model, under the assumption of continuous trading we have a unique equivalent martingale measure, given by the Radon-Nikodym density 

\eqx \frac{d\Q^*}{d\P} = \exp \left( \frac{-\mu}{\sigma}W_T - \frac{\mu^2}{2\sigma^2}T \right)
\eqend
and that $W_t + \frac{\mu}{\sigma}t$ is a standard Brownian motion with respect to $\Q^*$. Hence we can calculate this density for each simulated path in the validation set and then compare to the estimates generated by our neural network, tracking the mean square error in this estimate. 

\item \textbf{Relative entropy.} From the above, it is straightforward to obtain the entropy $H(\Q^* | \P) = \frac{\mu^2}{2\sigma^2}T$. Thus by tracking the finite sample relative entropy on our validation set we can assess convergence to this value.

\item \textbf{Mean square error for vanilla option prices.} A further simple metric is to assess whether the expected payoff of vanilla options under $\Q^*$ match with the option prices derived from the Black Scholes formula. To do so we calculate call prices on a grid of relative strikes $\mathcal{K} = \{ 0.8, 0.85, \ldots, 1.2 \}$ and calculate the mean square error between the theoretical prices and the Monte Carlo prices 
\end{enumerate} 

To reduce the Monte Carlo variance in these metrics, we evaluate them every 100 gradient steps on a separate validation set of size $10^6$. Results are plotted in Figure \ref{fig:bs_results}. We can clealy see good convergence in all three metrics, with the mean square errors of the density and option prices converging to zero, and the relative entropy converging to the correct value. 

\begin{figure}[h]
  \centering
  \includegraphics[width=1.0\textwidth]{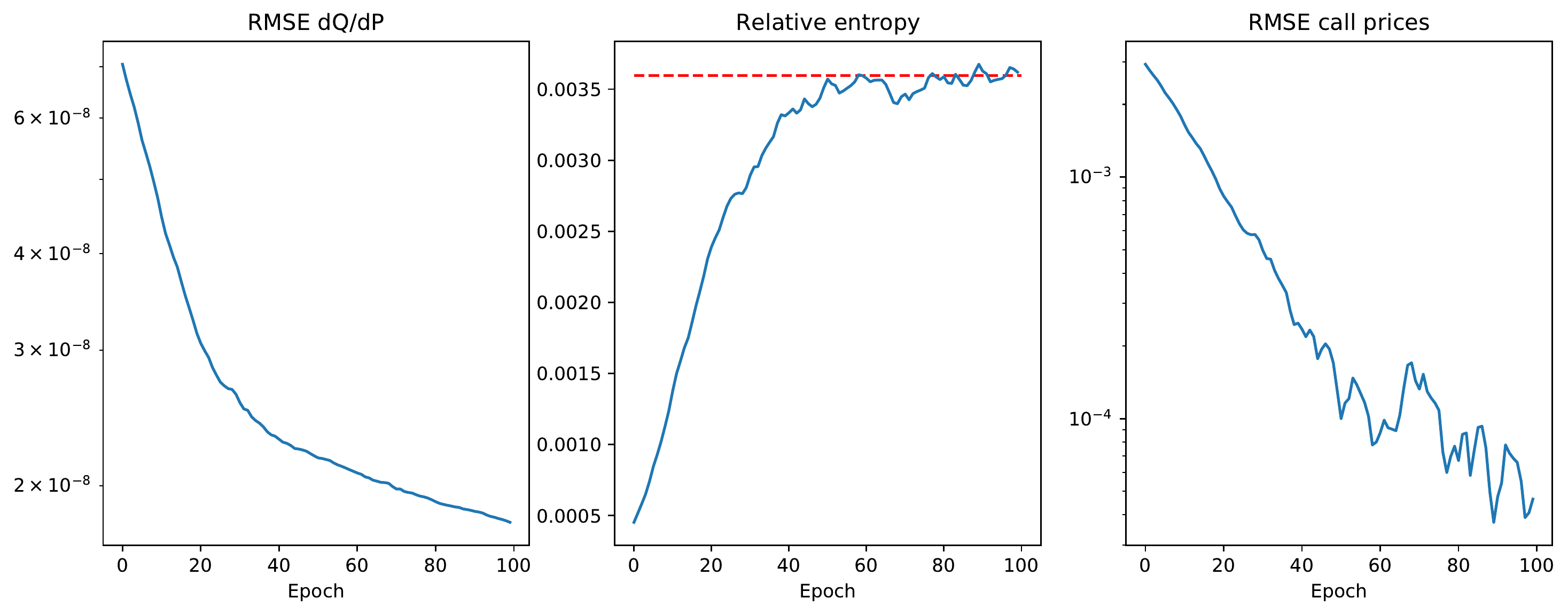}
  \caption{Metrics for the Black Scholes model. The red dashed line indicates the theoretical relative entropy of the MEMM. }
  \label{fig:bs_results}
\end{figure}

\subsection{Toy example: Black-Scholes model realized volatility} 

To illustrate the effect of the measure change, we apply it to a Black Scholes market where  spot and at the money puts and calls can be traded at each timestep.
 We simulate $10^5$ market paths where the spot is free of drift, but the options are priced with $\sigma^{implied} > \sigma^{realized}$. Specifically, we simulate market paths with $\sigma^{implied} = 0.2, \sigma^{realized} = 0.15.$
    In this case, an effective statistical arbitrage strategy would be to sell puts and calls, and delta hedge against that with
    spot.

 In discrete time, 
 equivalent measure changes can change realized volatility, contrary to the invariance of quadratic variation in continuous time. 
Hence, to remove statistical arbitrage we need to reweight the spot distribution so that the realised volatility is in line with the implied volatility. The outcome of this is demonstrated in Figure \ref{fig:realised_implied_weights} where on the left, paths with low realised volatility are given low weight, and on the right paths with high realized volatility are given high weight. 

\begin{figure}[h]
  \centering
  \includegraphics[width=0.8\textwidth]{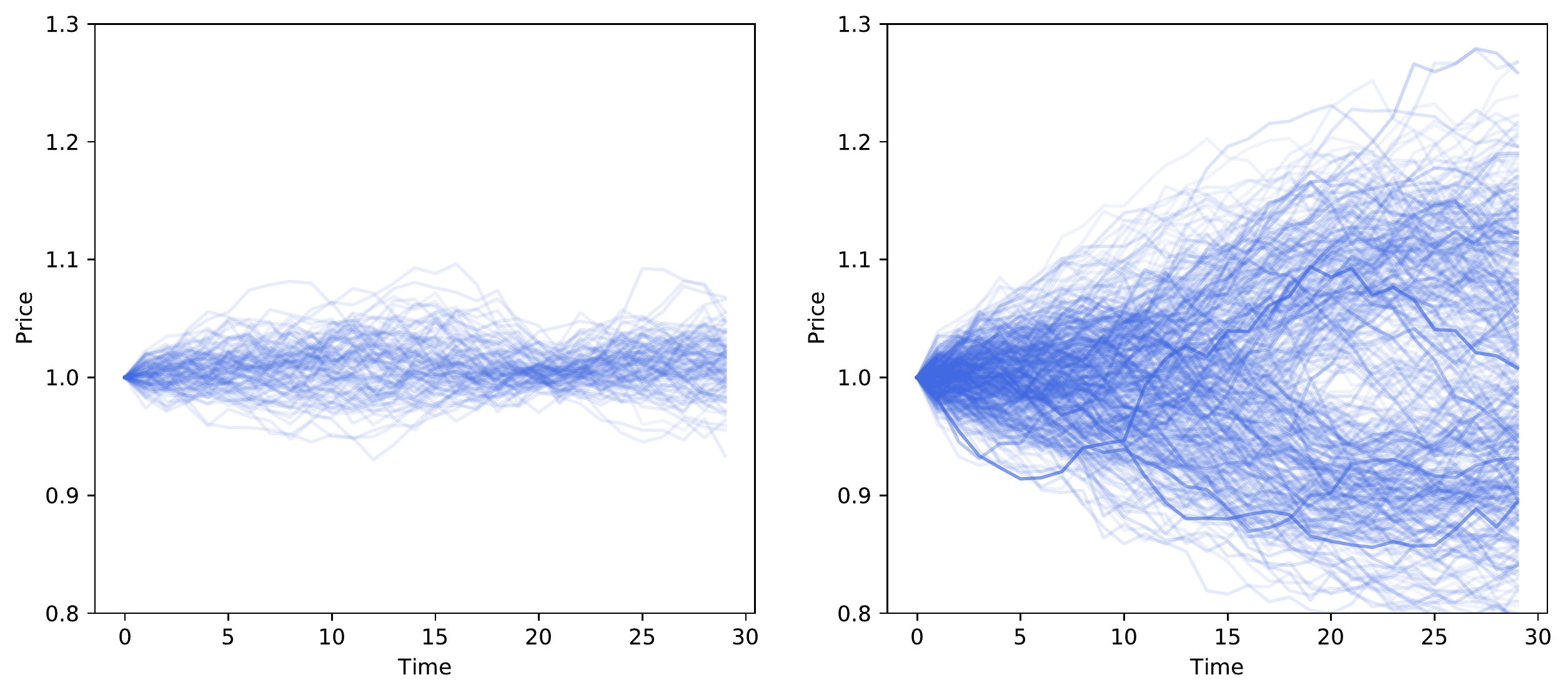}
  \caption{Paths given low weight (lowest $0.1\%$, left) and high weight (highest $0.1\%$, right) under the measure change in a Black Scholes market simulator with implied volatility higher than realized volatility.}
  \label{fig:realised_implied_weights}
\end{figure}

\subsection{Example: Vector Autoregressive model}

A simple multivariate model for spot and option prices could be to simulate a market through a vector autoregressive model.
To this end, we recap briefly the notion of \emph{discrete local volatilities}~\citet{DLV}. 
Assume maturities $0<\tau_1<\cdots<\tau_m$ and relative strikes $0<x_1<\ldots<1<\ldots<x_n$.\footnote{See~\citet{DLV} for the use of inhomogeneous strike grids.} 
We also define the additonal ghost strikes $x_0:=0$ and $x_{n_1}:=1+2 x_n \gg x_n$ for which we assume each option has intrinsic value.
Set $\tau_0:=0$. 
We ignore discounting and forwards here, but adding them is a minor extension.

For $i=1,\ldots,n$ and $j=1,\ldots,m$ we denote by $C^{j,i}$ the option with payoff $(S_{\tau_j}/S_0 - x_i)^+$ at maturity $\tau_j$.
Define
\eqx
	\Delta^{j,i} := \frac{ C^{j,i+1} - C^{j,i} }{ x_{i+1} - x_i } \ , \ \ \ \Gamma^{j,i} := \Delta^{j,i} - \Delta^{j,{i-1}}
	 \ \ \ \mbox{and} \ \ \ 
	 \Theta^{j,i} := \frac{ C^{j,i} - C^{j,i-1} }{ \tau_j - \tau_{j-1} }\ .
\eqend
The \emph{discrete local volatility} surface $(\sigma^{j,i})_{j,i}$ is given by

\eq{sigmadef}
	\sigma^{j,i} := \left\{ 
			\begin{array}{ll} 	
				\infty     & \mbox{if $\Gamma^{j,i} < 0$, or $\Theta^{j,i} < 0$, or $\Gamma^{j,i} = 0$ and $\Theta^{j,i} > 0$;}\\
							\sqrt{ \frac{2\, \Theta^{j,i} }{  x^{j,i}{}^2 \Gamma^{j,i} } } & \mbox{else.}
			\end{array}
\right.
\eqend

We recall that the options are free of static arbitrage \footnote{E.g.~there is a martingale process which generates these option prices.} if and only if $\sigma<\infty$, c.f.~\citet{DLV}.
Moreover, given a surface of finite discrete local volatilities, we may reconstruct the original surface by solving for the call prices using the implicit finite difference scheme implied by~\eqref{sigmadef}.
This involves inverting sequentially $m$ tridiagonal matrices. We note that this operation is available ``on graph" in modern automatic adjoint differentiation (AAD)
machine learning packages such as TensorFlow.
Figure~\ref{fig:histSPXdlv} shows such historic discrete local volatility surfaces as illustration.

\begin{figure}[h]
  \centering
  \includegraphics[width=0.8\textwidth]{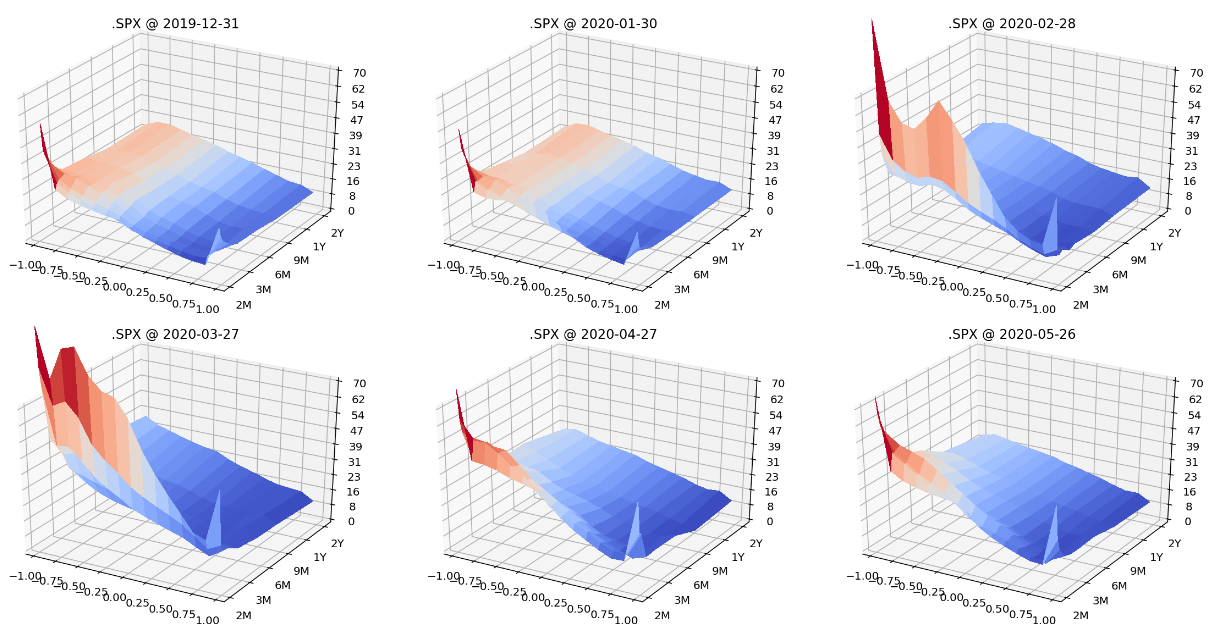}
  \caption{Historic discrete local volatility surfaces for S\&P500, in delta strikes}
  \label{fig:histSPXdlv}
\end{figure}

 \subsubsection*{Removing the Drift}
   
Given vectors $Y_t$ of (backward) log spot returns and logs of discrete local volatilities on maturities $0<\tau_1<\cdots<\tau_m$ and relative strikes $0<x_1<\ldots<1<\ldots<x_n$
we can simulate a $VAR(p)$ process:
\eqx
Y_{t} = A_{1}Y_{t-1} + ... + A_{p}Y_{t-p} + u_{t} \ , \qquad u_{t} \sim N(0, \Sigma_{u}) \ ,
\eqend
where $A_{i}$ is a $mn+1 \times mn+1$ coefficient matrix. We fit the model to data from EURO STOXX 50, using standard regression techniques from the Statsmodels Python package \cite{STATSMODELS}. Once the model has been fit, we can simulate new sample paths of log spot returns and discrete local volatilities and convert them to option prices using the methods detailed above, so that we can simulate market states of spot and option prices. To this end, we generate $10^5$ paths, of length 30 days, where each path consists of spot and both put and call option prices on a grid of maturities of $ \{ 20, 40, 60 \}$ and relative strikes $\mathcal{K} = \{ 0.85, \ldots, 1.15 \}$. For the measure change, we set transaction costs for all instruments to be proportional at level $\gamma = 0.001$.  

We then construct the measure $\Q^*$ which is free from statistical arbitrage. To parametrize the policy action, we use a two layer feedforward neural network, with 64 units in each layer and ReLU activation functions. We train for 2000 epochs on a training set of $10^5$ paths. Figure \ref{fig:realised_payoffl} compares the expected value of payoffs vs.~their prices under both the statistical and the risk-free measure in relation to trading cost. 
The expected payoff under the changed measure has clearly been flattened towards zero, and now lies within the transaction cost level, so that the drift has been removed. To confirm that statistical arbitrage has indeed been eliminated from the market simulator under this measure, we train two new, identical, network to find an optimal policy under the exponential utility, on the same simulated paths, one with unweighted, and one weighted by the output of the measure change, this time with proportional transaction costs at level $\gamma = 0.002$. Figure \ref{fig:var_pnl} shows the distributions of terminal gains of respective estimated optimal policies under $\P$ and $\Q$. Here the distribution of gains has been shifted so that it is centred at zero confirming that statistical arbitrage has been removed.

\begin{figure}[h]
  \centering
\includegraphics[width=0.8\textwidth]{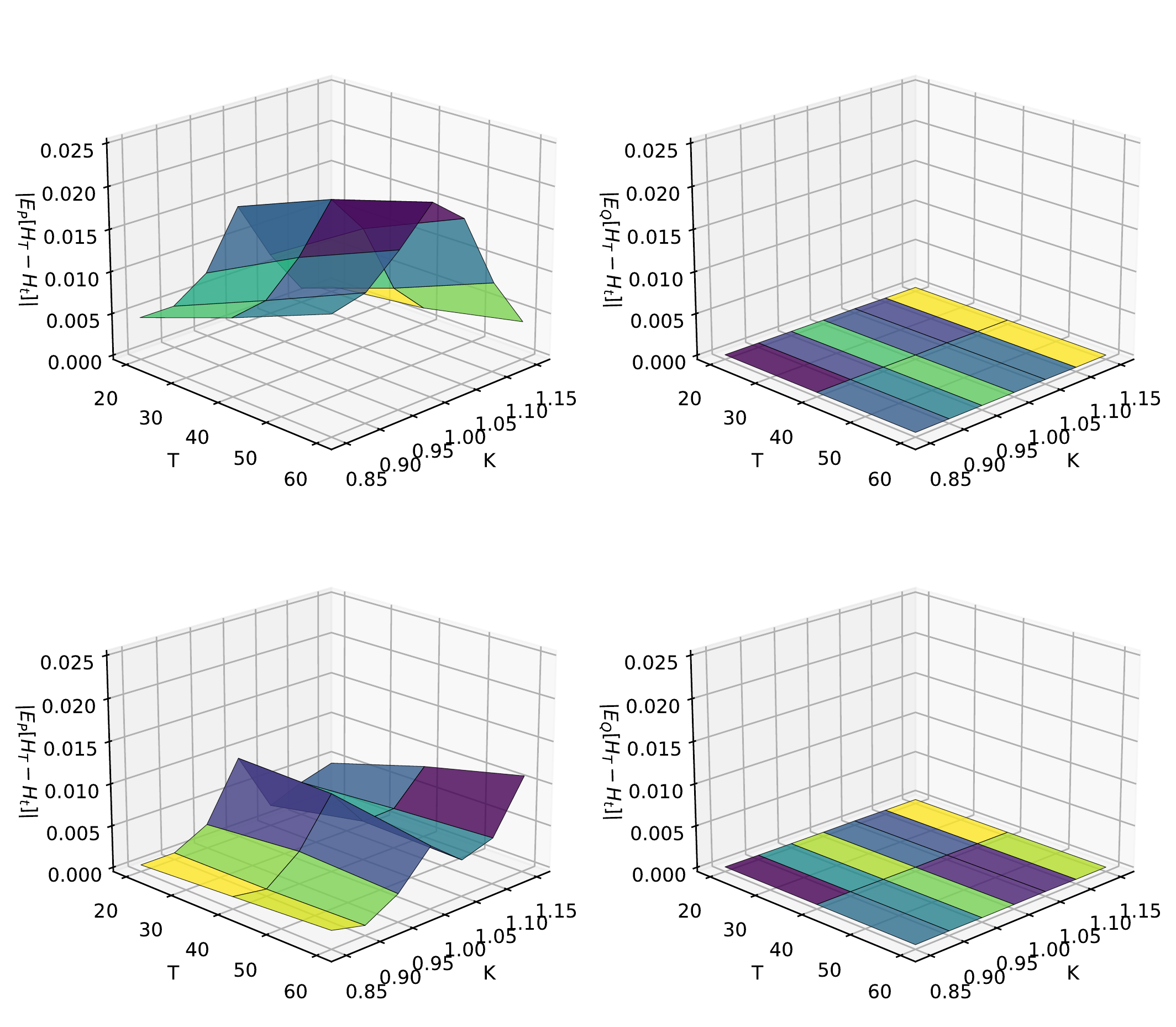}
  \caption{Average realised drift for call options (top) and put options (bottom) under the $\P$ market simulator (left) and $\Q^*$ simulator (right), by strike and maturity.}
  \label{fig:realised_payoffl}
\end{figure}

\begin{figure}[h]
  \centering
  \includegraphics[width=0.8\textwidth]{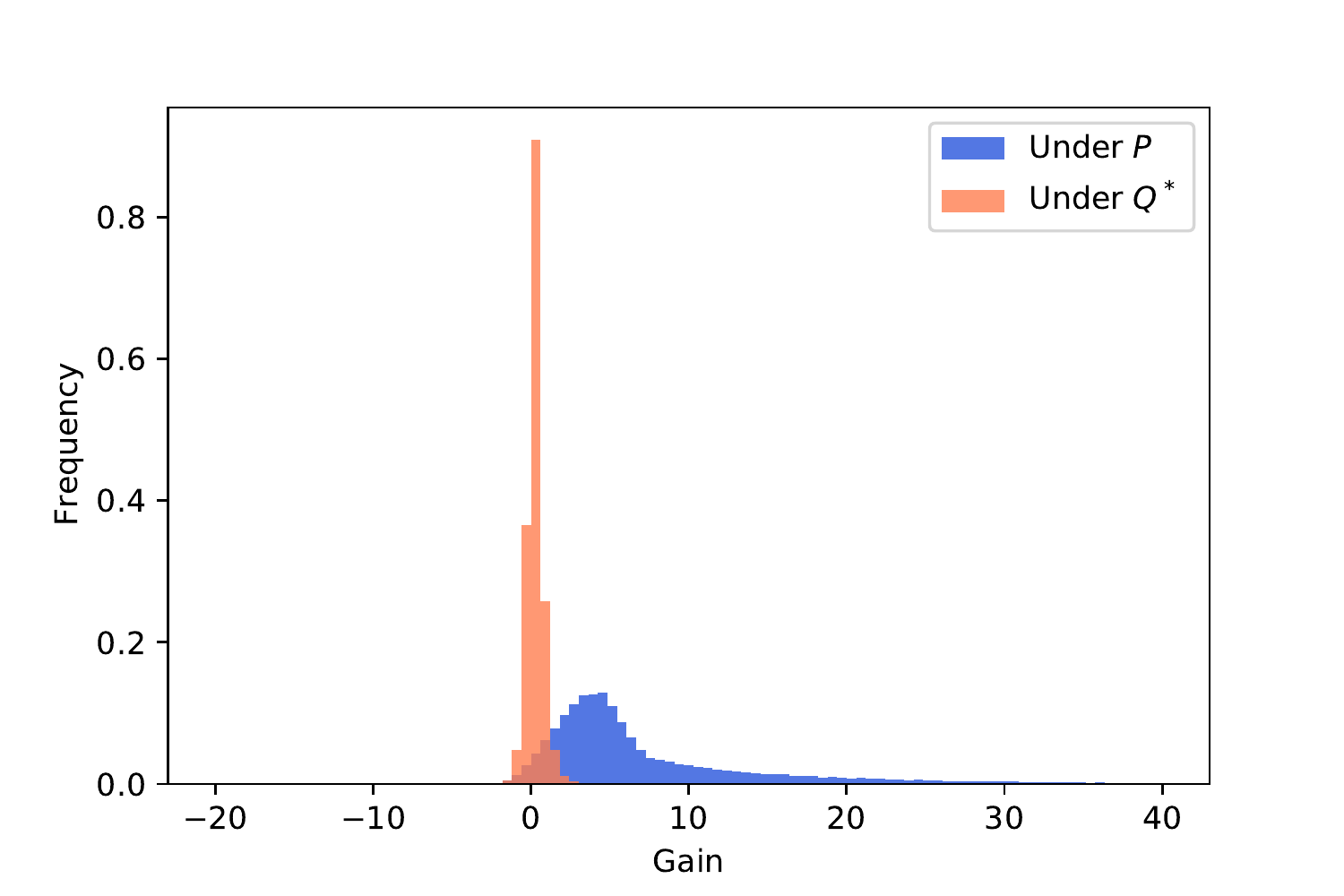}
  \caption{Gains distribution of estimated optimal policy under $\P$ and $\Q^*$}
  \label{fig:var_pnl}
\end{figure}

\section{Deep Hedging under Risk-Neutral Dynamics}

In this short section we briefly comment on the implications of using the MEMM when solving the Deep Hedging problem~\cite{DH}.

Assume here that we have a portfolio of financial instruments with terminal payoff $Z$. The \emph{Deep Hedging problem} is then 
\eq{dh}
	g_\lambda(Z) := \sup_a\ U_\lambda\big(\, Z + G(a) \,\big) \ .
\eqend
If we wanted to sell a new product with terminal payoff $X$ to a client, then our new terminal portfolio becomes $Z - X$.
The minimal risk-adjusted price is then given by
\eqx
	\pi_\lambda(X|Z) := g_\lambda(Z) - g_\lambda(Z - X) \ .
\eqend
We note that this implies our fair mark-to-market of closing our portfolio is $\pi_\lambda(Z|Z) = g_\lambda(Z) - g_\lambda$, giving
rise to the need assessing the presence of statistical arbitrage strategies.  To this end, define the 
entropy under our MEMM $\Q^*$ as,
\eqx
	U_\lambda^*( X ) := -\frac1\lambda \log \EX[\E_{\Q^*}]{ e^{-\lambda X} } \ .
\eqend
Define also the Deep Hedging problem under $\Q^*$,
\eqx
	g_\lambda^*( Z ) := \sup_a U_\lambda^*\big(\, Z + G(a)\, \big) \ . 
\eqend
Evidently, $\pi^*_\lambda(Z|Z) = g_\lambda^*( Z )$.
The following result generalizes lemma~3.3 in \citet{DH} where a similar formula was shown for the case where $Z$ has a replication strategy. The proof for the proposition and the following corollary are provided in the Appendix.

\begin{prop}\label{prop:dh1} Assume that transaction cost are super-additive i.e.~$c_t(a)+c_t(b)\geq c_t(a+b)$. Then,
\eqx
	\pi^*_\lambda(Z|Z)  = g_\lambda^*( Z ) \leq g_\lambda( Z ) - g_\lambda = \pi_\lambda(Z|Z) 
\eqend
with equality if transaction costs are zero and the problem is unconstrained.
\end{prop}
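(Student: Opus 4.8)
The plan is to compare the two Deep Hedging problems $g_\lambda^*(Z) = \sup_a U_\lambda^*(Z + G(a))$ and $g_\lambda(Z) - g_\lambda = \sup_a U_\lambda(Z + G(a)) - \sup_b U_\lambda(G(b))$ directly, exploiting the explicit exponential form of the entropy and the density \eqref{memm}. First I would unwind the change of measure: for any random variable $X$,
\[
U_\lambda^*(X) = -\frac1\lambda \log \E_{\Q^*}[e^{-\lambda X}] = -\frac1\lambda \log \frac{\E_\P[e^{-\lambda X} e^{-G(a^*)}]}{\E_\P[e^{-G(a^*)}]} = -\frac1\lambda \log \E_\P\big[ e^{-\lambda X - G(a^*)} \big] + \frac1\lambda \log \E_\P\big[ e^{-G(a^*)} \big].
\]
The second term is $-\frac1\lambda (\text{const})$; recognizing that $a^*$ minimizes $\E_\P[e^{-G(a)}]$ (Theorem~\ref{th:robustmemm}), and using that $U_1(G(a^*)) = \sup_a U_1(G(a)) = g_1 = g_\lambda$ for all $\lambda$ by Lemma~\ref{lem:g-prop} together with Proposition~\ref{prop:alllambda} — actually more simply $g_\lambda = 0$ in the relevant normalization is \emph{not} assumed here, so I would just keep $\frac1\lambda\log\E_\P[e^{-G(a^*)}] = -\frac1\lambda g_1$ where $g_1 = \sup_b U_1(G(b))$.

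The key step is then to bound $\sup_a U_\lambda^*(Z + G(a))$. Take any admissible $a$ and write, using the substitution $b := \lambda a + a^*$ whenever that lies in $\calA$,
\[
-\lambda(Z + G(a)) - G(a^*) = -\lambda Z - \big(\lambda G(a) + G(a^*)\big) \geq -\lambda Z - G(\lambda a + a^*),
\]
where the inequality is exactly where super-additivity of $c_t$ enters: $\lambda G(a) + G(a^*)$ has cost term $-\lambda c_t(a_t) - c_t(a^*_t)$, while $G(\lambda a + a^*)$ has cost $-c_t(\lambda a_t + a^*_t) \geq -c_t(\lambda a_t) - c_t(a^*_t) \geq -\lambda c_t(a_t) - c_t(a^*_t)$ (the last step using $c_t(\lambda a_t) \le \lambda c_t(a_t)$ for $\lambda\in(0,1]$ by convexity and $c_t(0)=0$; for $\lambda>1$ one instead scales $a^*$, so I would split into the two regimes or, cleaner, normalize via the identity in Proposition~\ref{prop:memm} that $a^*/\lambda$ is the $U_\lambda$-maximizer — I expect this bookkeeping with $\lambda$ to be the main obstacle). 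Taking $\P$-expectations, $\E_\P[e^{-\lambda(Z+G(a)) - G(a^*)}] \le \E_\P[e^{-\lambda Z - G(\lambda a + a^*)}]$, hence
\[
U_\lambda^*(Z + G(a)) \le -\frac1\lambda \log \E_\P\big[e^{-\lambda Z - G(\lambda a + a^*)}\big] - \frac1\lambda g_1 = U_\lambda\big(Z + G(\lambda a + a^*)/\lambda\big)\cdot(\text{rescaled}) - \tfrac1\lambda g_1,
\]
so that $\sup_a U_\lambda^*(Z + G(a)) \le \sup_{b} U_\lambda(Z + G(b)) - \tfrac1\lambda g_1 = g_\lambda(Z) - \tfrac1\lambda g_1$. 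It remains to identify $\tfrac1\lambda g_1$ with $g_\lambda$; but by Proposition~\ref{prop:memm} / cash-invariance and the very construction of $\Q^*$, the market under $\Q^*$ is statistical-arbitrage-free, i.e.\ $\sup_a U_\lambda^*(G(a)) = 0$, which applied with $Z=0$ to the displayed inequality forces $0 \le g_\lambda - \tfrac1\lambda g_1$, and the reverse (take $a = a^*/\lambda$ in $g_\lambda$, or just $a=0$ and monotonicity) pins down $\tfrac1\lambda g_1 = g_\lambda$. Combining, $g_\lambda^*(Z) \le g_\lambda(Z) - g_\lambda$, and $\pi^*_\lambda(Z|Z) = g_\lambda^*(Z)$ is immediate from cash-invariance since the ``$Z$'' in $g_\lambda^*(Z-Z) = g_\lambda^*(0) = 0$.

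For the equality case, when $c \equiv 0$ and the problem is unconstrained, $\calA_t = \R^n$ so the substitution $b = \lambda a + a^*$ is always admissible and, crucially, $G$ is linear in $a$, so $\lambda G(a) + G(a^*) = G(\lambda a + a^*)$ with \emph{equality}; every inequality above becomes an equality, giving $g_\lambda^*(Z) = g_\lambda(Z) - g_\lambda$. I would write this as a short remark after the main chain rather than redoing the argument. The main obstacle, as flagged, is handling the scaling factor $\lambda$ cleanly across the regimes $\lambda \le 1$ and $\lambda > 1$ while only invoking super-additivity and convexity of $c_t$; the cleanest route is probably to first reduce to $\lambda = 1$ using Proposition~\ref{prop:memm}'s observation that $a^*$ is the minimizer of $\E_\P[e^{-G(a)}]$ independently of $\lambda$ and that $a^*/\lambda$ maximizes $U_\lambda(G(\cdot))$, so that one only ever needs $c_t(\mu x) \le \mu c_t(x)$ for $\mu \in (0,1]$ plus super-additivity, never a scaling up.
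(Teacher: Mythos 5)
Your overall strategy is the same as the paper's: unwind the density $d\Q^*/d\P$, use super-additivity of $c_t$ to absorb $G(a^*)$ into a shifted policy $\lambda a + a^*$, and reindex the supremum. The scaling bound $c_t(\lambda a_t)\le\lambda c_t(a_t)$ that you flag as the main obstacle does in fact hold for all $\lambda\ge 0$ (convexity with $c_t(0)=0$ for $\lambda\le 1$; iterated super-additivity on the integer part plus convexity on the fractional part for $\lambda>1$), so that part closes, and the ``$\le$'' after ``Taking $\P$-expectations'' is just a typo for ``$\ge$'' --- your pointwise inequality and the conclusion you draw from it are mutually consistent.

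The genuine gap is the final step, where you replace $\tfrac1\lambda g_1=-\tfrac1\lambda\log\E_\P[e^{-G(a^*)}]$ by $g_\lambda$. Your chain establishes $g^*_\lambda(Z)\le g_\lambda(Z)-\tfrac1\lambda g_1$, so to conclude you need $\tfrac1\lambda g_1\ge g_\lambda$. But the $Z=0$ substitution combined with $g^*_\lambda=0$ yields the \emph{opposite} inequality $\tfrac1\lambda g_1\le g_\lambda$, and your argument for ``the reverse'' --- taking $a=a^*/\lambda$ in the supremum defining $g_\lambda$ --- only produces a lower bound on $g_\lambda$ (and leans on the zero-cost identity of Proposition~\ref{prop:memm}, unavailable here), so it cannot deliver $g_\lambda\le\tfrac1\lambda g_1$. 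As written you end with $g^*_\lambda(Z)\le g_\lambda(Z)-\tfrac1\lambda g_1$ and $\tfrac1\lambda g_1\le g_\lambda$, which together do not imply the claim; moreover exact equality $\tfrac1\lambda g_1=g_\lambda$ is not to be expected for genuinely super-additive, non-proportional costs, so ``pinning it down'' is the wrong goal. The missing inequality $g_\lambda\le\tfrac1\lambda g_1$ is true and follows from one more application of the same scaling bound you already use: for any admissible $a$, $\lambda G(a)\le G(\lambda a)$, hence $\E_\P[e^{-\lambda G(a)}]\ge\E_\P[e^{-G(\lambda a)}]\ge\inf_b\E_\P[e^{-G(b)}]=\E_\P[e^{-G(a^*)}]$, i.e.\ $U_\lambda(G(a))\le-\tfrac1\lambda\log\E_\P[e^{-G(a^*)}]$. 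This is exactly the identification the paper makes (written as an equality) in the second line of its displayed chain. The equality case and the identity $\pi^*_\lambda(Z|Z)=g^*_\lambda(Z)$ are handled correctly.
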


\begin{corollary}[Optimal policy for Risk Neutral Deep Hedging]\label{cor:dh2}
Assume transaction costs are zero. Under the statistical measure $\P$ suppose that $a'$ is a solution to the Deep Hedging problem for $Z$, and that $a^*$ 
is an optimal statistical arbitrage policy, i.e.~$U_\lambda(G(a^*)) = g_\lambda$.

Then the policy $a = a' - a^*$ is a solution to the Deep Hedging problem under the minimal entropy martingale measure $\Q^*$.
\end{corollary}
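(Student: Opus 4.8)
The plan is to reduce the Deep Hedging problem under $\Q^*$ to the one under $\P$ by writing the $\Q^*$-entropy explicitly in terms of the $\P$-entropy and the optimal statistical-arbitrage gain $G(a^*)$. With zero transaction costs $G$ is linear in the policy and the admissible set is all of $\calA=\R^{nm}$ (since $\calA_t=\{x:c_t(x)<\infty\}=\R^{n}$), so both facts are available. First I would pin down the normalisation: by Proposition~\ref{prop:memm} the density \eqref{memm} is built from the minimiser $\hat a$ of $\E_\P[e^{-G(a)}]$, and $a^*=\hat a/\lambda$ is the maximiser of $U_\lambda(G(\cdot))$ with $U_\lambda(G(a^*))=g_\lambda$; since $G$ is linear, $G(\hat a)=\lambda G(a^*)$ and hence
\[
\frac{d\Q^*}{d\P}=\frac{e^{-G(\hat a)}}{\E_\P[e^{-G(\hat a)}]}=\frac{e^{-\lambda G(a^*)}}{\E_\P[e^{-\lambda G(a^*)}]}\ .
\]

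Next I would prove the change-of-measure identity: for every $\P$-integrable $X$,
\[
U_\lambda^*(X)=-\frac1\lambda\log\E_\P\!\Big[\tfrac{d\Q^*}{d\P}\,e^{-\lambda X}\Big]=-\frac1\lambda\log\frac{\E_\P[e^{-\lambda(X+G(a^*))}]}{\E_\P[e^{-\lambda G(a^*)}]}=U_\lambda\big(X+G(a^*)\big)-U_\lambda\big(G(a^*)\big)=U_\lambda\big(X+G(a^*)\big)-g_\lambda\ ,
\]
which is a one-line computation from the explicit density. Applying this with $X=Z+G(b)$ and using linearity of $G$ (so $G(b)+G(a^*)=G(b+a^*)$) gives $U_\lambda^*\big(Z+G(b)\big)=U_\lambda\big(Z+G(b+a^*)\big)-g_\lambda$ for every $b\in\R^{nm}$.

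Finally I would take suprema. Because $b\mapsto b+a^*$ is a bijection of the unconstrained policy set $\R^{nm}$ onto itself, $g_\lambda^*(Z)=\sup_{b}U_\lambda^*\big(Z+G(b)\big)=\sup_{b'}U_\lambda\big(Z+G(b')\big)-g_\lambda=g_\lambda(Z)-g_\lambda$, and by hypothesis the last supremum is attained at $b'=a'$. Choosing $b=a'-a^*$, so that $b+a^*=a'$, yields $U_\lambda^*\big(Z+G(a'-a^*)\big)=U_\lambda\big(Z+G(a')\big)-g_\lambda=g_\lambda(Z)-g_\lambda=g_\lambda^*(Z)$, i.e.\ $a=a'-a^*$ is optimal for the Deep Hedging problem under $\Q^*$; the same chain also re-derives the equality case of Proposition~\ref{prop:dh1}. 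I expect the only delicate point to be the bookkeeping of the $1/\lambda$ rescaling between the policy $\hat a$ defining $\Q^*$ and the $U_\lambda$-maximiser $a^*$, together with the (harmless but genuinely used) fact that $c\equiv0$ forces $\calA=\R^{nm}$, which is what makes both the linearity of $G$ and the translation $b\mapsto b+a^*$ legitimate; everything else is the routine computation above.
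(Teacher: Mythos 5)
Your proposal is correct and follows essentially the same route as the paper's own (one-line) proof: rewrite $U_\lambda^*$ via the explicit density as $U_\lambda^*(X)=U_\lambda(X+G(a^*))-g_\lambda$, use linearity of $G$ under zero costs so that $G(a'-a^*)+G(a^*)=G(a')$, and match against $g_\lambda^*(Z)=g_\lambda(Z)-g_\lambda$ from Proposition~\ref{prop:dh1}. If anything, you are more explicit than the paper about the $1/\lambda$ rescaling between the minimiser of $\E_\P[e^{-G(a)}]$ and the $U_\lambda$-maximiser $a^*$, and about the translation bijection of the unconstrained policy set, both of which the paper leaves implicit.
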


The previous two results show that in the absence of transaction costs, solving the Deep Hedging problem under $\Q^*$ directly removes the statistical arbitrage element of the policy that was present under $\P$. Indeed, the risk-neutral Deep Hedging problem could be solved implicitly by solving the two optimisation problems under $\P$ and taking their difference.

\section{Conclusion}

We have presented a numerically efficient method for computing a risk-neutral density for a set of paths over a number of time steps. Our method is applicable to paths of derivatives and option prices
in particular, hence we effectively provide a framework for statistically learned \emph{stochastic implied volatility} models using only basic linear algebra and comonly available machine learning tools. 
Our method is generic and does not depend on the market simulator itself, except that it requires that the simulator does not produce static arbitrage opportunities. It also caters naturally
for transaction cost and trading constraints, and is easily extended to multiple assets.

Finally, with Theorem~\ref{thm:nsa1} we have also provided novel insights into the relationship between statistical arbitrage under trading frictions and to what degree the prices of instruments may deviate from their expected values.
\appendix

\section{Proofs}

\subsection{Proof of Lemma~\ref{lem:g-prop}}

\begin{proof}[of Lemma~\ref{lem:g-prop}]
Consider $0 < \lambda \leq \lambda' < \infty$, and let $X$ be a random variable such that $\E[|X|]<\infty$. Firstly, we have
\begin{equation*}
U_{\lambda'}(X)  = - \frac{1}{\lambda'} \log \mathbb{E}[e^{-\lambda' X}] = - \frac{1}{\lambda'} \log \mathbb{E}\big[\big(e^{-\lambda X}\big)^{\frac{\lambda'}{\lambda}}\big],
\end{equation*}
where the map $x \mapsto x^{\frac{\lambda'}{\lambda}}$ is convex since $\lambda' \geq \lambda$. Thus, by Jensen's inequality,
\begin{equation*}
U_{\lambda'}(X) = - \frac{1}{\lambda'} \log \mathbb{E}\big[\big(e^{-\lambda X}\big)^{\frac{\lambda'}{\lambda}}\big] \leq - \frac{1}{\lambda'} \log \mathbb{E}\big[\big(e^{-\lambda X}\big)\big]^{\frac{\lambda'}{\lambda}} = - \frac{1}{\lambda} \log \mathbb{E}[e^{-\lambda X}] = U_\lambda(X),
\end{equation*}
(We may have $U_{\lambda'}(X) = -\infty$ or $U_{\lambda'}(X) = -\infty = U_\lambda(X)$, but the inequality remains nevertheless valid.) Secondly, since $x \mapsto e^{-\lambda x}$ is also convex, Jensen's inequality further implies $\mathbb{E}[e^{-\lambda X}] \geq e^{-\lambda \mathbb{E}[X]}$, whereby
\begin{equation*}
U_0(X) = \E[X] = -\frac{1}{\lambda} \log (e^{-\lambda \E[X]}) \geq -\frac{1}{\lambda} \log \mathbb{E}[e^{-\lambda X}] = U_\lambda(X).
\end{equation*}
Thirdly, since $e^{-\lambda' \essinf X} \geq \E[e^{-\lambda' X}]$, we have
\begin{equation*}
U_{\infty}(X) = \essinf X = - \frac{1}{\lambda'} \log e^{-\lambda' \essinf X} \leq  - \frac{1}{\lambda'} \log \E[e^{-\lambda' X}] = U_{\lambda'}(X).
\end{equation*}
In summary,
\begin{equation}\label{eq:U-ineq}
U_0(X) \geq U_{\lambda}(X) \geq U_{\lambda'}(X) \geq U_\infty(X).
\end{equation}

Now, for $0 \leq \lambda \leq \lambda' \leq \infty$, 
\begin{equation*}
g_{\lambda} = \sup_{a} U_\lambda (G(a)) \geq \sup_{a} U_{\lambda'} (G(a)) = g_{\lambda'},
\end{equation*}
by \eqref{U-ineq}, which proves that $\lambda \mapsto g_\lambda$ is non-increasing. To establish non-negativity, it remains to note that
\begin{equation*}
g_\infty = \sup_{a} U_\infty (G(a)) \geq U_\infty (G(0)) = U_\infty (0) = \essinf 0 = 0.
\end{equation*}
\end{proof}

\subsection{Proof of Proposition \ref{prop:alllambda}}

\begin{proof}[of Proposition \ref{prop:alllambda}]\label{proof:alllambda}
we will prove that $g_{\lambda} = 0$ for all $\lambda \in (0,\infty)$ implies $g_0 = 0$ for the entropy, since then the result follows from  monotonicity of $g_\lambda$. Suppose instead that $g_0 > 0$, while $g_\lambda = 0$ for all $\lambda >0$. Then there is an admissible policy $a$ such that
\begin{equation*}
U_0(G(a)) = \E[G(a)] > 0.
\end{equation*}
Since $H \in L^\infty$, $a \in L^\infty$, for this policy $G(a)$ is almost surely bounded, and so we have
\begin{equation*}
U_0(G(a)) = \lim_{\lambda \rightarrow 0+} U_\lambda (G(a))
\end{equation*}
by \citet{FOLLMER}, implying in turn that $U_\lambda (G(a)) > 0$ for some $\lambda  > 0$. But this contradicts the assumption $g_\lambda = 0$ for all $\lambda >0$, so it follows that $g_0 = 0$.
\end{proof}

\subsection{Proof of Theorem \ref{thm:nsa1}}\label{proof:nsa1}
To prove Theorem \ref{thm:nsa1}, we need a few auxiliary results:

\begin{lemma}\label{lem:cond-exp}
Let $X$ be a random variable such that $\E[|X|]<\infty$ and let $\mathcal{G} \subset \mathcal{F}$ be a $\sigma$-algebra. Suppose that $Y$ is a non-negative, $\mathcal{G}$-measurable random variable such that 
\begin{equation}\label{eq:exp-bound}
|\E[1_A X ]| \leq \E[1_A Y] <\infty \quad \text{for any $A \in \mathcal{G}$.}
\end{equation}
Then,
\begin{equation}\label{eq:cond-exp-bound}
|\E[X | \mathcal{G}]| \leq Y \quad \text{$\prob$-a.s.}
\end{equation}
\end{lemma}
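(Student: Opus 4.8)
The plan is to use the defining property of conditional expectation together with the standard ``test on the exceptional set'' argument: since $\E[X\mid\mathcal{G}]$ and $Y$ are both $\mathcal{G}$-measurable, any set on which a $\mathcal{G}$-measurable inequality between them fails is itself a member of $\mathcal{G}$, and can therefore be used as the test set $A$ in \eqref{exp-bound} to produce a contradiction. First I would record the integrability we need: taking $A=\Omega$ in the hypothesis gives $\E[Y]<\infty$, so $Y$ is integrable, and since $\E[|X|]<\infty$ the conditional expectation $\E[X\mid\mathcal{G}]$ is a well-defined integrable random variable. This makes all expectations appearing below finite, so the manipulations are legitimate.

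For the upper bound, set $A:=\{\E[X\mid\mathcal{G}]>Y\}\in\mathcal{G}$. By the definition of conditional expectation, $\E[1_A X]=\E[1_A\,\E[X\mid\mathcal{G}]]$. If $\P(A)>0$ then $\E[1_A(\E[X\mid\mathcal{G}]-Y)]>0$, hence $\E[1_A X]>\E[1_A Y]$, which contradicts $\E[1_A X]\leq|\E[1_A X]|\leq\E[1_A Y]$ from \eqref{exp-bound}. Therefore $\P(A)=0$, i.e.\ $\E[X\mid\mathcal{G}]\leq Y$ $\prob$-a.s.

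For the lower bound, apply the same reasoning to $B:=\{\E[X\mid\mathcal{G}]<-Y\}\in\mathcal{G}$: if $\P(B)>0$ then $\E[1_B X]=\E[1_B\,\E[X\mid\mathcal{G}]]<-\E[1_B Y]$, so $-\E[1_B X]>\E[1_B Y]$, again contradicting $|\E[1_B X]|\leq\E[1_B Y]$. Hence $\P(B)=0$ and $\E[X\mid\mathcal{G}]\geq-Y$ $\prob$-a.s. Combining the two bounds yields $|\E[X\mid\mathcal{G}]|\leq Y$ $\prob$-a.s.

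There is essentially no serious obstacle here; this is a routine measure-theoretic lemma. The only points requiring a little care are getting the direction of the inequality right when passing from the ``upper'' set $A$ (where one works with $X$) to the ``lower'' set $B$ (where one effectively works with $-X$), and making sure every expectation in play is finite — which is precisely what the assumptions $\E[1_A Y]<\infty$ and $\E[|X|]<\infty$ guarantee. Should one wish to weaken the hypothesis to allow $\E[1_A Y]=\infty$, the argument survives by truncation, replacing $A$ with $A\cap\{Y\leq n\}$ and letting $n\to\infty$, but under the stated assumptions this refinement is unnecessary.
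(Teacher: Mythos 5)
Your proof is correct and follows essentially the same route as the paper's: both define the exceptional sets $\{\E[X\mid\mathcal{G}]>Y\}$ and $\{\E[X\mid\mathcal{G}]<-Y\}$, note they lie in $\mathcal{G}$, and use them as test sets in \eqref{exp-bound} to derive a contradiction via the defining property of conditional expectation. Your additional remarks on integrability and the direction of the inequalities are sound but not substantively different from the paper's argument.
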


\begin{proof}
we note that if \eqref{cond-exp-bound} does not hold, then we have $\P[\E[X | \mathcal{G}] > Y]>0$ or $\P[\E[X | \mathcal{G}] < -Y]>0$.
In the former case, let $A : = \{ \E[X | \mathcal{G}] > Y\} 
$ 
so that $A \in \mathcal{G}$ and $\P[A]>0$. We have then
\begin{equation*}
\E[1_A X] = \E[1_A \E[X|\mathcal{G}]]  
> \E[1_A Y],
\end{equation*}
which contradicts the assumption \eqref{exp-bound}.
In the latter case, take similarly $A : = \{ \E[X | \mathcal{G}] < -Y\} 
$ whereby again $A \in \mathcal{G}$ and $\P[A]>0$. Now,
\begin{equation*}
\E[1_A X] = \E[1_A \E[X|\mathcal{G}]] 
< -\E[1_A Y],
\end{equation*}
contradicting \eqref{exp-bound} as well.
\end{proof}

We will aim to reduce the proof of Theorem \ref{thm:nsa1} to the case of proportional costs. To this end,
define by $\tilde c_t$ the respective proportional cost
\eq{tildeC}
	\tilde c_t(a_t) := \left\{
	\begin{array}{ll}
		\gmup_t \cdot a^+_t + \gmdn_t \cdot a^-_t , & a_t \in \calA_t , \\
		\infty \ , & a_t \not \in \calA_t \ ,
	\end{array}
	\right.
\eqend
and the associated gains process
\eqx
 	\tilde G(a) := \sum_{t=0}^{m-1} a_t \cdot ( H^{(t)}_T - H^{(t)}_t ) - \tilde c_t(a_t) \ .
\eqend

\begin{lemma}\label{lem:Gc}
We have
$\sup_a \E[ \tilde G(a) ] = 0$ if and only if $\sup_a \E[ G(a) ] = 0$.
\end{lemma}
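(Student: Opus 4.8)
The plan is to split the equivalence into a trivial inequality between the two suprema and a scaling argument for the reverse, both anchored on the observation that $0\in\calA$ with $G(0)=\tilde G(0)=0$, so that $\sup_a\E[G(a)]\ge 0$ and $\sup_a\E[\tilde G(a)]\ge 0$ automatically.

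For the ``only if'' direction I would invoke the convexity bound recorded above, $c_t(a_t)\ge \gmup_t\cdot a_t^+ + \gmdn_t\cdot a_t^-$, together with the fact that $\{c_t<\infty\}=\calA_t=\{\tilde c_t<\infty\}$, to conclude $\tilde c_t\le c_t$ pointwise and hence $G(a)\le\tilde G(a)$ for every $a\in\calA$. Taking expectations and then suprema gives $0\le\sup_a\E[G(a)]\le\sup_a\E[\tilde G(a)]$, so vanishing of the right-hand side forces vanishing of the left.

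For the ``if'' direction, assume $\sup_a\E[G(a)]=0$ and fix an arbitrary admissible $a\in\calA$; the goal is $\E[\tilde G(a)]\le 0$. Since each $\calA_t$ is convex and contains $0$, the scaled policy $\varepsilon a$ lies in $\calA$ for every $\varepsilon\in(0,1]$, whence $\tfrac1\varepsilon\E[G(\varepsilon a)]\le 0$. I would then let $\varepsilon\downarrow 0$ in
\[
\tfrac1\varepsilon\E[G(\varepsilon a)]=\sum_{t=0}^{m-1}\Big(\E\big[a_t\cdot(H^{(t)}_T-H^{(t)}_t)\big]-\E\big[\tfrac1\varepsilon c_t(\varepsilon a_t)\big]\Big).
\]
Convexity of $c_t$ with $c_t(0)=0$ makes $\varepsilon\mapsto\tfrac1\varepsilon c_t(\varepsilon a_t)$ nondecreasing in $\varepsilon$, so as $\varepsilon\downarrow 0$ it decreases to its infimum, which the structural assumption on the marginal cost identifies as $\gmup_t\cdot a_t^+ +\gmdn_t\cdot a_t^- = \tilde c_t(a_t)$; moreover it is dominated by the integrable variable $c_t(a_t)$, integrability being a consequence of $a,H\in L^\infty$ and $\E[|G(a)|]<\infty$. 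Monotone (or dominated) convergence then gives $\E[\tfrac1\varepsilon c_t(\varepsilon a_t)]\downarrow\E[\tilde c_t(a_t)]$, hence $\tfrac1\varepsilon\E[G(\varepsilon a)]\to\E[\tilde G(a)]$, and passing to the limit in $\tfrac1\varepsilon\E[G(\varepsilon a)]\le 0$ yields $\E[\tilde G(a)]\le 0$. Taking the supremum over $a\in\calA$ and combining with $\E[\tilde G(0)]=0$ gives $\sup_a\E[\tilde G(a)]=0$.

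The step I expect to be the crux is precisely this passage to the limit $\varepsilon\downarrow 0$ inside the expectation: it relies on the monotonicity of the difference quotients of the convex costs from the origin, on the identification of their one-sided derivative at $0$ with the proportional cost $\tilde c_t$ via the structural assumption, and on the $L^1$-domination furnished by the boundedness hypotheses on $a$ and $H$. Everything else is routine bookkeeping.
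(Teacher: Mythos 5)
Your proof is correct and follows essentially the same route as the paper's: the easy direction from $\tilde c_t \leq c_t$ (hence $G(a)\leq \tilde G(a)$), and the substantive direction by scaling the policy to $\varepsilon a$, using the monotonicity in $\varepsilon$ of the difference quotients $\frac1\varepsilon c_t(\varepsilon a_t)$ together with the structural assumption on the marginal cost and monotone convergence to identify $\lim_{\varepsilon\downarrow 0}\E[\frac1\varepsilon G(\varepsilon a)]$ with $\E[\tilde G(a)]$. The only differences are cosmetic: the paper phrases the second direction as a proof by contradiction, whereas you argue directly and are somewhat more explicit about the integrability/domination needed to pass the limit inside the expectation.
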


\begin{proof}
 Since by construction $G(a)\leq \tilde G(a)$ we have to show that
 $\sup_a \E[ G(a) ]= 0$ also implies $\sup_a \E[ \tilde G(a) ]= 0 $.
Assume the contrary, i.e.~$a^*$ is a strategy
such that $\E[ \tilde G(a^*) ]=b>0$ but $\sup_a \E[ G(a) ]= 0$. 
Concavity of $G$ and $G(0)=0$ imply that $\E[ G(\varepsilon a^*) ] \geq \varepsilon b$ and therefore
$\E[ \frac1\varepsilon G(\varepsilon a^*) ] \geq  b$. Taking the monotone limit $\varepsilon\downarrow 0$ yields the contradiction
$\E[ \tilde G(a^*) ] = b > 0$.
\end{proof}

\begin{proof}[of Theorem \ref{thm:nsa1}]
Thanks to Lemma~\ref{lem:Gc} we may focus on the case where transaction costs are proportional.
We first prove that if the market is free from statistical arbitrage, then inequalities \eqref{N3buy} and \eqref{N3sell} in the theorem hold.

Define 
 \eqx
 	\adn^{(i)}_t := \min\left\{ 1, - \half \inf_{a\in\calA} a_t^{i} \right\} \ \ \ \mbox{and} \ \ \ \aup^{(i)}_t := \min\left\{1,\half \sup_{a\in\calA} a_t^{i} \right\}\ .
 \eqend
Evidently, $\adn^{(i)}_t, \aup^{(i)}_t \in[0,1]$ and, in particular, $\adn^{(i)}_t > 0$ whereever $\gmdn^{i}_t < \infty$ and $\aup^{(i)}_t > 0$ where $\gmup^{i}_t < \infty$.

Fix $i$ and $t$ where $\aup^{(i)}_t>0$ is not empty, let $A\in\calF_t$ be arbitrary, and let $\overline{a}$ be the policy which is zero except over $A$ where we buy $\aup^{(i)}_t$
units of $H^{(i)}_t$. Note that $\overline{a}\in\calA$.
\begin{equation*}
\begin{split}
0 \geq U_0(\tilde G(\overline{a})) = \E[\tilde G(\overline{a})] & = \E\big[1_A\aup^{(i)}_t \big(H^{(t,i)}_T - H^{(t,i)}_t\big)\big] - \E\big[\gmup^{(i)}_t \big|1_A\aup^{(i)}_t \big| \big] \\
& =\E\big[1_A \aup^{(i)}_t\big(H^{(t,i)}_T - H^{(t,i)}_t\big)\big] - \E\big[1_A \aup^{(i)}_t \gmup^{(i)}_t  \big] \\
& = \E\big[1_A\ \aup^{(i)}_t\ \left\{\   \E\big[H^{(t,i)}_T - H^{(t,i)}_t\big| \mathcal{F}_t\big] -  \gmup^{(i)}_t\ \right\} \big],
\end{split}
\end{equation*}
which implies on $\{\aup^{(i)}_t > 0\}$ with Lemma~\ref{lem:cond-exp}
\eq{N3_1}
	H^{(t,i)}_t + \gmup^{(i)}_t  \geq 
	\E\big[ H^{(t,i)}_T \big| \mathcal{F}_t\big]   
\eqend
i.e.~equation \eqref{N3buy}. Note that on $\{ \aup^{(i)}_t =  \}$ we have $\{ \gmup^{(i)}_t = \infty \}$ so that \eqref{N3buy} holds trivially there, too.

For selling, fix again $i$ and $t$ where $\adn^{(i)}_t>0$ is not empty, let $A\in\calF_t$ agan be arbitrary,  and let $\underline{a}$ be the policy which is zero except over $A$
where we sell $\adn^{(i)}_t$
units of $H^{(i)}_t$.
\begin{equation*}
\begin{split}
0 \geq U_0(\tilde G(\underline{a})) = \E[\tilde G(\underline{a})] & = \E\big[1_A\ \adn^{(i)}_t\ \left\{\  - \E\big[H^{(t,i)}_T - H^{(t,i)}_t\big| \mathcal{F}_t\big] -  \gmdn^{(i)}_t  \ \right\} \big],
\end{split}
\end{equation*}
Reording on $\{\adn^{(i)}_t > 0\}$ yields
\eq{N3_2}
	H^{(t,i)}_t - \gmdn^{(i)}_t \leq 
	\E\big[ H^{(t,i)}_T \big| \mathcal{F}_t\big]   
\eqend
and therefore equation~\eqref{N3sell}.

Let us now prove the reverse statement that if \eqref{N3_1} and \eqref{N3_2} hold, then for $g_0=0$. Recall that for any admissible $a$ we have $a_t = a^+_t - a^-_t$
with $a^+,a^-\geq 0$.
\begin{equation*}
g_0 = \E[G(a)] = \sum_{t=0}^{m-1} \sum_{i=1}^n \E\big[ (a^{(i)+}_t - a^{(i)-}_t)\big(H^{(t,i)}_T - H^{(t,i)}_t\big)- \left\{ \gmup^{(i)}_t a^{(i)+}_t + \gmdn^{(i)}_t a^{(i)-}_t \right\}  \big] \ .
\end{equation*}
We have
\eqx
\E\big[a^{(i)+}_t\big(H^{(t,i)}_T - H^{(t,i)}_t\big)-\gmup^{(i)}_t a^{(i)+}_t  \big] 
= \E\big[ a^{(i)+}_t  \big\{\  \E\big[ \big(H^{(t,i)}_T - H^{(t,i)}_t\big)\big| \mathcal{F}_t \big] -\gmup^{(i)}_t \ \big\} \big] 
 \supstack{\eqref{N3_1}}\leq 0
\eqend
and
\eqx
\E\big[- a^{(i)-}_t\big(H^{(t,i)}_T - H^{(t,i)}_t\big)-\gmdn^{(i)}_t a^{(i)-}_t  \big] 
= \E\big[ a^{(i)+}_t  \big\{\ - \E\big[ \big(H^{(t,i)}_T - H^{(t,i)}_t\big)\big| \mathcal{F}_t \big] -\gmdn^{(i)}_t \ \big\} \big] 
 \supstack{\eqref{N3_2}}\leq 0
\eqend
which shows $g_0=0$.
\end{proof}

%
%

\begin{remark}\label{lemma:zeroinf}
It may be of interest to note the following dichotomy for the entropy with proportional transaction costs, which follows since $X \mapsto U_0(X)$ is linear and $X \mapsto U_\infty(X)$ positive homogeneous. In the unconstrained case, for $\lambda \in \{0,\infty\}$ either $g_\lambda= 0$ or $g_\lambda = \infty$. To see this, note that on the one hand, if $g_\lambda \leq 0$ then $g_\lambda = 0$ by Lemma \ref{lem:g-prop}.
On the other hand, if $g_\lambda > 0$ then there exists an admissible policy $a$ such that $U_\lambda(G(a))>0$. But for any constant $c>0$ the policy $ca$  is also admissible, while for $\lambda \in \{0,\infty\}$ we have 
\begin{equation*}
U_\lambda(G(ca)) = U_\lambda(c G(a)) = c U_\lambda(G(a))
\end{equation*}
by positive homogeneity. Letting $c \rightarrow \infty$ shows that then $g_\lambda = \infty$. Note that if the policy is constrained, then the case $g_\lambda=\infty$ is replaced by an optimal strategy being a boundary point in $\mathcal{A}$.
\end{remark}

\subsection{Proof of Theorem~\ref{th:robustmemm}}

\begin{proof}[of Theorem~\ref{th:robustmemm}]\label{proof:robustmemm}
superadditivity means $c_t(a^*) + c_t( a)\geq  c_t(a^*+ a) $ and therefore
$
 -G(a^*) -G(a) \geq -G(a^* + a)
$.
For $\Q^*$ as defined as in Theorem~\ref{th:robustmemm} we find
\eqx
\inf_a\E_{\Q^*} \left[ e^{- G(a)} \right] = \frac{ \inf_a\E_{\P} \left[ e^{ -G(a^*) -G(  a)} \right]} { \E_{\P} \left[ e^{ -G(a^*)} \right]} 
\supstack{(*)}\geq
\frac{ \inf_a \E_{\P} \left[ e^{ -G(a^* + a)} \right]} { \E_{\P} \left[ e^{ -G(a^*)} \right]} 
\supstack{(**)}=
\frac{ \E_{\P} \left[ e^{ -G(a^*)} \right]} { \E_{\P} \left[ e^{ -G(a^*)} \right]}  = 1 
\eqend
where $(*)$ becomes an equality in the absence of transaction cost, and where $(**)$ is due to optimality of $a^*$.
	With the results of
	Proposition~\ref{prop:alllambda} and 
Theorem~\ref{thm:nsa1}, this implies that the market under $\Q^*$ with cost $c$ and constraints $\calA$ is free from statistical arbitrage. The extension to higher cost
or tighter restrictions is trivial.
\end{proof}

\subsection{Proof of Proposition ~\ref{prop:approximationerr}}

\begin{proof}[of Proposition~\ref{prop:approximationerr}]\label{proof:approximationerr}
This follows in a similar fashion to the proof of Theorem \ref{th:robustmemm} above:
\eq{prox}
\inf_a \E_{\tilde \Q} \left[ e^{-\lambda G(a)} \right] = \frac{ \inf_a \E_{\P} \left[ e^{ -G(a(\theta^*)) - \lambda G(a)} \right]} { \E_{\P} \left[ e^{ -G(a(\theta^*))} \right]} 
\supstack{(*)}\geq
\frac{ \inf_a \E_{\P} \left[ e^{ -G(a(\theta^*) + \lambda a)} \right]} { \E_{\P} \left[ e^{ -G(a(\theta^*))} \right]} 
\supstack{(**)}=
\frac{\E_{\P} \left[ e^{ -G(a^*)} \right]} { \E_{\P} \left[ e^{ -G(a(\theta^*))} \right]}  
\geq \frac{1}{1+\epsilon} 
\eqend
where again $(*)$ becomes an equality in the absence of transaction cost, and where $(**)$ is  due to optimality of $a^*$.
   This implies that $\tilde U_\lambda(G(a)) \leq  \frac{1}{\lambda} \log(1+\epsilon)$ for $\tilde U_\lambda$ being the entropy
under the measure $\tilde \Q$ implied by $\tilde a=a(\theta^*)$.
\end{proof}

\subsection{Proof of Proposition~\ref{prop:dh1} and Corollary  \ref{cor:dh2}}

\begin{proof}[for Proposition~\ref{prop:dh1}] For this proof, 
we note that  $\lambda G(a) \geq G(\lambda a)$ for all $\lambda\geq 0$. We have equality if trading cost are  proportional and finite (i.e., unconstrained).
Since $c_t$ is superadditive we have~$G(a) + G(b) \geq G(a+b)$.
\eqary
	g_\lambda^*( Z ) & = & 
	-\frac1\lambda \log \inf_a \E_{\Q^*} \left[ e^{-\lambda (Z + G(a)) }\right] \\
	& = &  
	-  g_\lambda -\frac1\lambda \log  \inf_a \E_\P\left[ e^{-\lambda \{Z + G(a) + G(a^*)/\lambda\} }\right]
	\\
	& \supstack{(*)}\leq &  
	-  g_\lambda -\frac1\lambda \log  \inf_a \E_\P\left[ e^{-\lambda \{Z + G(a) + G(a^*/\lambda)\} }\right]
	\\
	&
	\supstack{(**)}\leq
	&
	- g_\lambda - \frac1\lambda \log  \inf_{a'=a+a^*/\lambda} \E_\P\left[ e^{-\lambda (Z + G(a')) }\right]
	\\
	& = &
	g_\lambda(Z) - g_\lambda \ .
\eqaryend
We have equality in $(*)$ if transaction costs are proportional and in $(**)$ if transaction costs are zero.
\end{proof}

\begin{proof}[for Corollary \ref{cor:dh2}]
This follows directly from the previous result and the proof is virtually identical:
\eqx
	U_\lambda^*(Z_T + G(a' - a^*))	
	=  
	-\frac1\lambda \log \E_\P\left[ e^{-\lambda (Z_T + G(a')) }\right] - g_\lambda 
	=
	g_\lambda^\P(Z_T) - g_\lambda 
	=
	g_\lambda^*(Z_T) \ .
\eqend
\end{proof}

\bibliography{ReferencesSIV}

\bibliographystyle{icml2019}

\section*{Disclaimer}

{\small
Opinions and estimates constitute our judgement as of the date of this Material, are for informational purposes only and are subject to change without notice. It is not a research report and is not intended as such. Past performance is not indicative of future results. This Material is not the product of J.P. Morgan’s Research Department and therefore, has not been prepared in accordance with legal requirements to promote the independence of research, including but not limited to, the prohibition on the dealing ahead of the dissemination of investment research. This Material is not intended as research, a recommendation, advice, offer or solicitation for the purchase or sale of any financial product or service, or to be used in any way for evaluating the merits of participating in any transaction. Please consult your own advisors regarding legal, tax, accounting or any other aspects including suitability implications for your particular circumstances.  J.P. Morgan disclaims any responsibility or liability whatsoever for the quality, accuracy or completeness of the information herein, and for any reliance on, or use of this material in any way.  \\
\noindent
 Important disclosures at: www.jpmorgan.com/disclosures}


\end{document}